\def\LM{{\mathrm{LM}}}
\def\LC{{\mathrm{LC}}}
\def\LT{{\mathrm{LT}}}
\def \lcm{{\rm lcm}}
\def\ri{\rangle}
\def\li{\langle}
\def\H{\mathcal{H}}
\def\I{\mathcal{I}}
\def\J{\mathcal{J}}
\def\mult#1#2#3{{#1}_{#2}(#3)}
\def\NN{\mathbbm{N}}
\def\TT{\mathbbm{T}}
\def\P{\mathcal{P}}
\def\X{\mathcal{X}}
\def\H{\mathcal{H}}
\def\I{\mathcal{I}}
\def\J{\mathcal{J}}
\def\kk{\mathbbm{k}}
\def\P{\mathcal{P}}
\def\R{\mathcal{R}}
\def \LT{{\rm LT}}
\def \LM{{\rm LM}}
\def \LC{{\rm LC}}
\def \HF{{\rm HF}}
\def \lcm{{\rm lcm}}
\def \HP{{\rm HP}}
\def \HS{{\rm HS}}
\def \hilb{{\rm hilb}}
\def\ri{\rangle}
\def\li{\langle}
\DeclareMathOperator{\cls}{cls}
\DeclareMathOperator{\depth}{depth}
\DeclareMathOperator{\gin}{gin}
\DeclareMathOperator{\lt}{lt}
\DeclareMathOperator{\reg}{reg}
\newtheorem{theorem}{Theorem}[section]
\newtheorem{proposition}[theorem]{Proposition}
\newtheorem{definition}[theorem]{Definition}
\newtheorem{corollary}[theorem]{Corollary}
\newtheorem{lemma}[theorem]{Lemma}
\newtheorem{example}[theorem]{Example}
\newtheorem{remark}[theorem]{Remark}
\newtheorem{conjecture}[theorem]{Conjecture}
\begin{document}
%
\conferenceinfo{ISSAC}{'17  Kaiserslautern,  Germany}

\conferenceinfo{ISSAC'17,} {July 25-28, 2017, Kaiserslautern,  Germany.}

\CopyrightYear{2012}


\title{Dimension-Dependent Upper Bounds for Gr\"obner Bases}
\numberofauthors{2}

\author{
\alignauthor
Amir Hashemi\\
       \affaddr{Department of Mathematical Sciences}\\  \affaddr{Isfahan University of Technology}\\ \affaddr{Isfahan, 84156-83111, Iran}\\
       \affaddr{School of Mathematics, Institute for Research in Fundamental Sciences (IPM)}\\ \affaddr{Tehran, 19395-5746, Iran}
       \email{Amir.Hashemi@cc.iut.ac.ir}
       \alignauthor
Werner M. Seiler\\
\affaddr{Institut f\"ur Mathematik, Universit\"at Kassel, Heinrich-Plett-Stra\ss e 40, 34132 Kassel, Germany}\\
\email{seiler@mathematik.uni-kassel.de}
}\maketitle

\begin{abstract}
  We improve certain degree bounds for Gr\"obner bases of polynomial ideals
  in generic position. We work exclusively in deterministically verifiable
  and achievable generic positions of a combinatorial nature, namely either
  strongly stable position or quasi stable position. Furthermore, we
  exhibit new dimension- (and depth-)dependent upper bounds for the
  Castelnuovo-Mumford regularity and the degrees of the elements of the
  reduced Gr\"obner basis (w.r.t.\ the degree reverse lexicographical
  ordering) of a homogeneous ideal in these positions.
\end{abstract}


\category{F.2.2}{Analysis of Algorithms and Problem Complexity}{Nonnumerical Algorithms and Problems}

\keywords{Polynomial ideals, Gr\"obner bases, Pommaret bases, generic
  positions, stability, degree, dimension, depth, Castelnuovo-Mumford regularity.}

\section{Introduction}
\label{sec:1}

Gr\"obner bases, introduced by Bruno Buchberger in his Ph.D. thesis (see
e.g. \cite{Bruno1,Bruno2}), have become a powerful tool for constructive
problems in polynomial ideal theory and related domains. For practical
applications, in particular, the implementation in computer algebra
systems, it is important to establish upper bounds for the complexity of
determining a Gr\"obner basis for a given homogeneous polynomial ideal. Using Lazard's
algorithm \cite{Daniel83}, a good measure to estimate such a bound, is an
upper bound for the degree of the intermediate polynomials during the
Gr\"obner basis computation. If the input ideal is {\em not} homogeneous,
the maximal degree of the output Gr\"obner basis is {\em not} sufficient
for this estimation. On the other hand, M\"oller and Mora \cite{Moller}
showed that to discuss degree bounds for Gr\"obner bases, one can restrict
to homogeneous ideals. Thus upper bounds for the degrees of the elements of
Gr\"obner bases of homogeneous ideals, allow us to estimate the complexity
of computing Gr\"obner bases in general.

Let us review some of the existing results in this direction. Let $\P$ be
the polynomial ring $\kk[x_1,\ldots, x_n]$ where $\kk$ is of characteristic
zero and $\I\subset \P$ be an ideal generated by \emph{homogeneous}
polynomials of degree at most $d$ with $\dim(\I)=D$. The first doubly
exponential upper bounds were proven by Bayer, M\"oller, Mora and Giusti,
see \cite[Chapter 38]{Mora} for a comprehensive review of this topic. Based
on results due to Bayer \cite{bayer} and Galligo \cite{Gal1,Gal2}, M\"oller
and Mora \cite{Moller} provided the upper bound $(2d)^{(2n+2)^{n+1}}$ for
any Gr\"obner basis of $\I$. They also proved that this doubly exponential
behavior cannot be improved. Simultaneously, Giusti \cite{Giusti1} showed
the upper bound $(2d)^{2^{n-1}}$ for the degree of the reduced Gr\"obner
basis (w.r.t.\ the degree reverse lexicographic order) of $\I$ when the
ideal is in {\em generic position}. Then, using a self-contained and
constructive combinatorial argument, Dub\'e \cite{Dube} proved the so far
sharpest degree bound $2( d^2/2 + d)^{2^{n-1}}\sim 2d^{2^n}$.

In 2005, Caviglia and Sbarra \cite{Caviglia} studied upper bounds for the
Castelnuovo-Mumford regularity of homogeneous ideals.  Analyzing Giusti's
proof, they gave a simple proof of the upper bound $(2d)^{2^{n-2}}$ for the
degree reverse lexicographic Gr\"obner basis of an ideal $\I$ in generic
position (they also showed that this bound holds independent of the
characteristic of $\kk$). Finally, Mayr and Ritscher \cite{MR}, by
following the tracks of Dub\'e \cite{Dube}, obtained the
dimension-dependent upper bound $2(1/2d^{n-D}+d)^{2^{D-1}}$ for any reduced
Gr\"obner basis of $\I$. It is worth while remarking that there are also
\emph{lower} bounds for the complexity: $d^{2^m}$ with $m=n/10-O(1)$ from
the work of Mayr and Meyer \cite{MM} and $d^{2^m}$ where $m\sim n/2$ due to
Yap \cite{Yap}.

In this article, we will first improve Giusti's bound by showing that if
$\I$ is in strongly stable position and $D>1$, then $2d^{(n-D)2^{D-1}}$ is
a simultaneous upper bound for the Castelnuovo-Mumford regularity of $\I$
and for the maximal degree of the elements of the Gr\"obner basis of $\I$
(with respect to the degree reverse lexicographic order). Furthermore, we
will sharpen the bound of Caviglia-Sbarra to
$(d^{n-D}+(n-D)(d-1))^{2^{D-1}}$. We will see that neither of these bounds
is always greater than the other. Finally, we will show that, if $\I$ is in
quasi stable position and $D\le 1$, Giusti's bound may be replaced by
$nd-n+1$ (this result was already obtained by Lazard \cite{Daniel83} when
the ideal is in generic position).  In the recent work \cite{hashemi3}, we
showed how many variants of stable positions -- including quasi stable and
strongly stable position -- can be achieved via linear coordinate
transformations constructed with a deterministic algorithm.

The article is organized as follows. In the next section, we give basic
notations and definitions.  In Sections $4,5$ and $6$ we improve the degree
bounds provided by Giusti, Caviglia-Sbarra and Lazard, respectively.

\section{Preliminaries}
\label{sec:2}

Throughout this article, we keep the following notations. Let
$\P=\kk[x_{1},\dots,x_{n}]$ be the polynomial ring (where $\kk$ is of
characteristic zero). A power product of the variables $x_{1},\dots,x_{n}$
is called {\em term} and $\TT$ denotes the monoid of all terms in $\P$. We
consider non-zero homogeneous polynomials $f_1,\ldots ,f_k\in \P$ and the
ideal $\I=\li f_1,\ldots ,f_k \ri$ generated by them.  We assume that $f_i$
is of degree $d_i$ and that the numbering is such that
$d_1\ge d_2\ge \cdots\ge d_k>0$.  We also set $d=d_1$.  Furthermore, we
denote by $\R=\P/\I$ the corresponding factor ring and by $D$ its
dimension. Finally, we use throughout the degree reverse lexicographic
order with $x_n\prec \cdots \prec x_1$.

 The {\em leading term} of a polynomial $f\in \P$, denoted by ${\LT}(f)$, is
the greatest term (with respect to $\prec$) appearing in $f$ and its
coefficient is the {\em leading coefficient} of $f$ and we denote it by
${\LC}(f)$. The {\em leading monomial} of $f$ is the product
${\LM}(f)={\LC}(f){\LT}(f)$. The {\em leading ideal} of $\I$ is defined as
${\LT}(\I) = \langle {\LT}(f) \ | \ f \in {\I}\rangle$. For the finite set
$F=\{f_1,\ldots ,f_k\}\subset {\P}$, $\LT(F)$ denotes the set
$\{\LT(f_1),\ldots ,\LT(f_k)\}$. A finite subset
$G \subset {\I}$ is called a {\em Gr\"obner basis} of
$\I$ w.r.t.\ $\prec$, if ${\LT}(\I) = \langle \LT(G) \rangle$.  We refer to
\cite{Becker} for more details on Gr\"obner bases.

Given a graded $\P$-module $X$ and a positive integer $s$, we denote by
$X_{s}$ the set of all homogeneous elements of $X$ of degree $s$. To define
the Hilbert regularity of an ideal, recall that the {\em Hilbert function}
of $\I$ is defined by $\HF_{\I}(t)=\dim_{\kk}(\R_t)$; the dimension of
$\R_t$ as a $\kk$-linear space. From a certain degree on, this function of
$t$ is equal to a polynomial in $t$, called {\em Hilbert polynomial}, and
denoted by $\HP_{\I}$ (see \cite{little} for more details on this
topic). The {\em Hilbert regularity} of $\I$ is
$\hilb(\I)=\min \{m \ \arrowvert \ \forall t \geq m , \
\HF_{\I}(t)=\HP_{\I}(t)\}$. Finally, recall that the {\em Hilbert series}
of $\I$ is the power series
${\HS_{\I}}(t)=\sum_{s=0}^{\infty}{{\HF_{\I}}(s)t^{s}}$.

\begin{proposition}\label{hilbI}
  There exists a univariate polynomial $p(t)$ with $p(1)\ne 0$ such that
  ${\HS_{\I}}(t)=p(t)/(1-t)^D$. Furthermore,
  $\hilb(\I)=\max\{0,\deg(p)-d+1\}$.
\end{proposition}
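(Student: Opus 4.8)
The plan is to recognize the first assertion as the Hilbert--Serre theorem and to prove it in two steps, the second of which simultaneously sets up the computation needed for the regularity formula. First, since $\P$ is a polynomial ring, Hilbert's syzygy theorem supplies a finite graded free resolution of $\R=\P/\I$; applying additivity of the Hilbert series along this resolution together with $\HS_{\P}(t)=(1-t)^{-n}$ yields a representation
\begin{equation*}
\HS_{\I}(t)=\frac{Q(t)}{(1-t)^{n}},\qquad Q(t)\in\mathbbm{Z}[t].
\end{equation*}
Cancelling the highest power of $(1-t)$ dividing $Q$, I would write $Q(t)=(1-t)^{n-e}p(t)$ with $p(1)\ne 0$, so that $\HS_{\I}(t)=p(t)/(1-t)^{e}$. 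What then remains for the first part is to show $e=D$.

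To identify the pole order $e$ with the dimension I would expand the series and read off the Hilbert polynomial (this is reused below). Writing $p(t)=\sum_{i=0}^{s}a_it^{i}$ with $a_s\ne 0$ and $s=\deg(p)$, and using $(1-t)^{-e}=\sum_{j\ge 0}\binom{j+e-1}{e-1}t^{j}$, I obtain
\begin{equation*}
\HF_{\I}(m)=\sum_{i=0}^{s}a_i\binom{m-i+e-1}{e-1}.
\end{equation*}
For $m\gg 0$ this is a polynomial in $m$ of degree $e-1$ whose leading coefficient is $p(1)/(e-1)!\ne 0$, so $\HP_{\I}$ has degree exactly $e-1$. Invoking the standard equality $\dim\R=\deg\HP_{\I}+1$ between the Krull dimension and one plus the degree of the Hilbert polynomial then gives $e=D$ and finishes the first assertion.

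For the second assertion I would compare $\HF_{\I}(m)$ with $\HP_{\I}(m)$ directly, now with $e=D$. The count $\binom{m-i+D-1}{D-1}$ of terms of degree $m-i$ in $D$ variables equals the value of the polynomial $\frac{(m-i+D-1)\cdots(m-i+1)}{(D-1)!}$ whenever $m-i\ge 0$; moreover this polynomial vanishes at $m-i=-1,\dots,-(D-1)$, so count and polynomial still agree there (both are $0$) and disagree precisely when $m-i\le -D$. Hence the summand indexed by $i$ produces a discrepancy between $\HF_{\I}$ and $\HP_{\I}$ only for $m\le i-D$, and since $a_s\ne 0$ the largest degree at which any discrepancy can occur is $m=s-D$. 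At that degree only the top term survives, contributing $\HF_{\I}(s-D)-\HP_{\I}(s-D)=(-1)^{D}a_s\ne 0$, whereas $\HF_{\I}(m)=\HP_{\I}(m)$ for every $m\ge s-D+1$. With the convention that degrees are non-negative, this yields $\hilb(\I)=\max\{0,s-D+1\}=\max\{0,\deg(p)-D+1\}$, the asserted value.

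The main obstacle is the single non-combinatorial ingredient used above, namely the identification of the pole order with $D$ through the equality $\dim\R=\deg\HP_{\I}+1$; all the rest is elementary manipulation of the series expansion and careful comparison of binomial counts with their polynomial continuations. I would either invoke this identification as a classical fact about graded rings or, to keep the argument self-contained, establish it via a Noether normalization of $\R$, reducing to a finite module over a polynomial subring in $D$ variables, where the rational form of $\HS_{\I}$ and the pole order $D$ become transparent.
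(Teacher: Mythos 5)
Your argument is correct, and there is nothing in the paper to compare it against line by line: the paper does not prove Prop.~\ref{hilbI} but cites \cite[Thm.~7, page 130]{Ralf}, and what you wrote is in substance that classical Hilbert--Serre proof (free resolution and additivity to get $Q(t)/(1-t)^n$, cancellation of $(1-t)$ factors, identification of the pole order with $D$ via $\dim\R=\deg\HP_{\I}+1$, then the binomial comparison locating the last degree where $\HF_{\I}$ and $\HP_{\I}$ differ). One point deserves emphasis: what you actually prove is $\hilb(\I)=\max\{0,\deg(p)-D+1\}$, with $D$ the dimension, whereas the statement as printed reads $\deg(p)-d+1$ with $d=d_1$ the maximal generator degree. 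The printed version is false --- for $\I=\li x_1^2\ri\subset\kk[x_1,x_2]$ one has $D=1$, $d=2$, $p(t)=1+t$ and $\hilb(\I)=1$, not $\max\{0,\deg(p)-d+1\}=0$ --- so the lowercase $d$ is a typo for $D$ (Fr\"oberg denotes the Krull dimension by $d$, which explains the slip), and your reading is the correct one; it is worth saying so explicitly rather than calling your formula ``the asserted value''. Two minor caveats: your discrepancy analysis implicitly assumes $D\ge 1$ (the factor $(D-1)!$ and the list of vanishing points degenerate at $D=0$), so the case $D=0$ --- where $\HS_{\I}=p$ is a polynomial, $\HP_{\I}=0$ and $\hilb(\I)=\deg(p)+1$ by inspection --- should be disposed of separately, and the claim $\deg\HP_{\I}=e-1$ needs the convention $\deg 0=-1$ when $e=0$. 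With those trivial repairs the proof is complete; the single non-elementary input, $\dim\R=\deg\HP_{\I}+1$, is standard, and your fallback via Noether normalization is a legitimate self-contained route.
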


For a proof of this result, we refer to \cite[Thm. 7, page 130]{Ralf}. It
follows immediately from Macaulay's theorem that the Hilbert function of
$\I$ is the same as that of $\LT(\I)$ and this provides an effective method
to compute it using Gr\"obner bases, see e.g.\ \cite{Singular}.

Let us state some auxiliary results on regular sequences. Recall that a
sequence of polynomials $f_1,\ldots ,f_k\in \P$ is called {\it regular} if
$f_i$ is a non-zero divisor on the ring $\P/\li f_1,\ldots ,f_{i-1}\ri$ for
$i=2,\ldots ,k$. This is equivalent to the condition that $f_i$ does not
belong to any associated prime of $\li f_1,\ldots ,f_{i-1}\ri$. It can be
shown that the Hilbert series of a regular sequence $f_1,\ldots, f_k$ is
equal to $\prod_{i=1}^k(1-t^{d_i})/(1-t^n)$, see e.g.\ \cite{Monique}.  The
converse of this result is also true, see \cite[Exercise 7, page
137]{Ralf}. In addition, these conditions are equivalent to the statement
that $D=n-k$. 

\begin{lemma}(\cite[Prop.\ 4.1, page 108]{Monique})\label{Sec3:Lem1} There
  exist homogeneous polynomials $g_1,\ldots,g_{n-D}\in \P$ such that the
  following conditions hold:
  \begin{itemize}
  \item[$(1)$] $\deg(g_i)=d_i$ for each $i$,
  \item[$(2)$] $g_i\equiv \lambda_i f_i \mod \li f_{i+1},\ldots, f_{k}\ri$
    for some $0 \ne \lambda_i \in \kk$ for $i=1,\ldots,n-D$,
  \item[$(3)$] $g_1,\ldots,g_{n-D}$ is regular sequence in $\P$.
  \end{itemize}
\end{lemma}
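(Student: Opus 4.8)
The plan is to construct the $g_i$ one at a time by a prime-avoidance argument, exploiting that $\P$ is Cohen--Macaulay so that any ideal generated by a regular sequence is unmixed. Throughout write $c=n-D=\mathrm{ht}(\I)$ and, for $1\le i\le k+1$, let $\I_i=\li f_i,\ldots,f_k\ri$ be the tail ideal (with $\I_{k+1}=\li 0\ri$), so that $\I_1=\I$. I will produce homogeneous $g_1,\ldots,g_c$ with $\deg(g_i)=d_i$ and $g_i\in(\I_i)_{d_i}$ satisfying condition $(2)$, such that $g_1,\ldots,g_i$ is a regular sequence. Since condition $(2)$ gives $\I_i=\li g_i\ri+\I_{i+1}$ for every $i$, telescoping yields the key identity
\[
\I=\li g_1,\ldots,g_{i-1}\ri+\I_i .
\]

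Suppose $g_1,\ldots,g_{i-1}$ have been chosen for some $1\le i\le c$ (the empty sequence when $i=1$, where the relevant ideal is $\li 0\ri$ with unique associated prime $\li 0\ri$). Because $\P$ is Cohen--Macaulay and $g_1,\ldots,g_{i-1}$ is a regular sequence, every associated prime $P$ of $\li g_1,\ldots,g_{i-1}\ri$ has height exactly $i-1$. I claim that $\I_i\not\subseteq P$ for each such $P$: indeed, if $\I_i\subseteq P$ then the identity above gives $\I=\li g_1,\ldots,g_{i-1}\ri+\I_i\subseteq P$, forcing $c=\mathrm{ht}(\I)\le\mathrm{ht}(P)=i-1<c$, a contradiction.

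With the claim in hand the choice of $g_i$ is routine. The associated primes $P$ are homogeneous and of height $i-1<n$, so none equals the irrelevant maximal ideal $\mf=\li x_1,\ldots,x_n\ri$; choosing a generator $f_j\notin P$ with $j\ge i$ and a variable $x_l\notin P$, the element $x_l^{\,d_i-d_j}f_j$ lies in $(\I_i)_{d_i}$ but not in $P$ (here $d_i-d_j\ge 0$ because $d_i\ge d_{i+1}\ge\cdots$). Hence $(\I_i)_{d_i}\cap P$ is a proper $\kk$-subspace of $(\I_i)_{d_i}$ for each of the finitely many associated primes $P$, and likewise $(\I_{i+1})_{d_i}$ is proper in $(\I_i)_{d_i}$ whenever $f_i\notin\I_{i+1}$. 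As $\kk$ has characteristic zero and is therefore infinite, the finite-dimensional space $(\I_i)_{d_i}$ is not a union of these finitely many proper subspaces, so there is $g_i\in(\I_i)_{d_i}$ avoiding all of them. Writing $g_i=\lambda_i f_i+h_i$ with $h_i\in(\I_{i+1})_{d_i}$, avoidance of $(\I_{i+1})_{d_i}$ gives $\lambda_i\ne 0$ (and if $f_i\in\I_{i+1}$ one simply takes $\lambda_i=1$), so condition $(2)$ holds; avoidance of the primes $P$ makes $g_i$ a non-zerodivisor modulo $\li g_1,\ldots,g_{i-1}\ri$, extending the regular sequence. Iterating up to $i=c$ produces the desired $g_1,\ldots,g_{n-D}$.

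The one genuinely delicate point is the claim that $\I_i\not\subseteq P$ for every associated prime $P$ of $\li g_1,\ldots,g_{i-1}\ri$; it is exactly here that the triangular shape of condition $(2)$ pays off, since it is the telescoping identity $\I=\li g_1,\ldots,g_{i-1}\ri+\I_i$ --- together with the Cohen--Macaulay fact that these associated primes all have height $i-1<c$ --- that rules out $\I_i\subseteq P$. Everything else is bookkeeping with homogeneous degrees and the standard observation that a vector space over an infinite field is not a finite union of proper subspaces.
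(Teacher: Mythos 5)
Your proof is correct. The paper offers no argument for this lemma---it is stated with a bare citation to Lejeune-Jalabert \cite[Prop.\ 4.1]{Monique}---and your prime-avoidance construction (unmixedness of ideals generated by regular sequences in the Cohen--Macaulay ring $\P$, the telescoping identity $\I=\li g_1,\ldots,g_{i-1}\ri+\I_i$ to rule out $\I_i\subseteq P$ via the height count $\mathrm{ht}(P)=i-1<n-D$, padding a generator $f_j\notin P$ by $x_l^{d_i-d_j}$ with $x_l\notin P$ to reach degree $d_i$, and avoidance of finitely many proper $\kk$-subspaces of $(\I_i)_{d_i}$ over the infinite field $\kk$) is exactly the standard argument behind that citation, with the delicate points (the case $f_i\in\I_{i+1}$, ensuring $\lambda_i\neq 0$, and the non-increasing degrees guaranteeing $d_i-d_j\ge 0$) all handled correctly.
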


\begin{definition}
  The {\em depth} of the homogeneous ideal $\I$ is defined as the maximal
  integer $\lambda$ such that there exists a regular sequence of linear
  forms $y_1,\ldots, y_\lambda$ on $\P/\I$.
\end{definition}

\begin{definition}\label{defbayer}
  The homogeneous ideal $\I$ is {\em $m$-regular}, if its minimal graded
  free resolution is of the form
  \begin{multline*}
    0\longrightarrow \bigoplus _{j}\P(e_{rj})\longrightarrow \cdots\\
    \cdots \longrightarrow \bigoplus _{j}\P(e_{1j})\longrightarrow
    \bigoplus_{j}\P(e_{0j})\longrightarrow \I\longrightarrow 0
  \end{multline*}
  with $e_{ij}-i \leq m$ for each $i,j$. The {\em Castelnuovo-Mumford
    regularity} of $\I$ is the smallest $m$ such that $\I$ is $m$-regular;
  we denote it by $\reg(\I)$.
\end{definition}

For more details on the regularity, we refer to
\cite{Mumford,Eisenbud_Goto,bayer_stillman,bermejo}. It is well-known that
in generic coordinates $\reg(\I)$ is an upper bound for the degree of the
Gr\"obner basis w.r.t.\ the degree reverse lexicographic order. This upper
bound is sharp, if the characteristic of $\kk$ is zero (see
\cite{bayer_stillman}).  A good measure to estimate the complexity of the
computation of the Gr\"obner basis of $\I$ is the maximal degree of the
polynomials which appear in this computation (see
\cite{Lazard81,Daniel83,Giusti1}).

\begin{definition}
  We denote by $\deg (\I, \prec)$ the maximal degree of the elements of the
  reduced Gr\"obner basis of the non-zero homogeneous ideal $\I$ w.r.t.\ the
  term order $\prec$.
\end{definition}

\begin{theorem}(\cite[Prop. 4.8, page 117]{Monique}) \label{maincor} If
  $\I$ is zero-dimensional, then $\deg(\I,\prec)\le d_1+\cdots +d_n-n+1$.
\end{theorem}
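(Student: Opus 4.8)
The plan is to reduce the problem to controlling the top nonzero graded piece of the Artinian algebra $\R=\P/\I$ and then to translate that information into a degree bound on the reduced \Gr basis via Macaulay's theorem. Since $\I$ is zero-dimensional we have $D=0$, so Lemma~\ref{Sec3:Lem1} supplies homogeneous polynomials $g_1,\dots,g_n$ with $\deg(g_i)=d_i$ forming a regular sequence, and moreover each $g_i\equiv\lambda_i f_i \mod \li f_{i+1},\dots,f_k\ri$ with $\lambda_i\neq 0$. First I would observe that this congruence forces $g_i\in\I$ for every $i$, since $g_i-\lambda_i f_i\in\li f_{i+1},\dots,f_k\ri\subseteq\I$ and $\lambda_i f_i\in\I$. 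Consequently the complete intersection $\J=\li g_1,\dots,g_n\ri$ satisfies $\J\subseteq\I$.

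The second step is to compute the Hilbert series of $\J$. As $g_1,\dots,g_n$ is a regular sequence of length $n$ in $n$ variables, ${\HS_{\J}}(t)=\prod_{i=1}^n(1-t^{d_i})/(1-t)^n=\prod_{i=1}^n(1+t+\cdots+t^{d_i-1})$, a polynomial whose top degree is $\sum_{i=1}^n(d_i-1)=d_1+\cdots+d_n-n$. Hence $(\P/\J)_t=0$ for all $t>d_1+\cdots+d_n-n$. Because $\J\subseteq\I$, the ring $\R$ is a graded quotient of $\P/\J$, so $\R_t=0$ for the same range of $t$. Writing $s$ for the largest degree with $\R_s\neq 0$, this yields $s\le d_1+\cdots+d_n-n$.

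The final and most delicate step is to pass from $s$ to the degree of the reduced \Gr basis. By Macaulay's theorem $\HF_{\I}=\HF_{\LT(\I)}$, so $(\P/\LT(\I))_t=0$ precisely when $\R_t=0$; in particular every term of degree $t>s$ lies in $\LT(\I)$. I would then argue that no minimal monomial generator of $\LT(\I)$ can have degree exceeding $s+1$: if $m$ were such a generator with $\deg(m)\ge s+2$, then for any variable $x_i$ dividing $m$ the term $m/x_i$ has degree $>s$ and hence already lies in $\LT(\I)$, contradicting the minimality of $m$. Since the leading terms of the reduced \Gr basis are exactly the minimal generators of $\LT(\I)$, and since the basis is homogeneous (the order is graded and $\I$ is homogeneous) so that each element has the same degree as its leading term, I conclude $\deg(\I,\prec)\le s+1\le d_1+\cdots+d_n-n+1$.

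I expect the main obstacle to be this last step: one must take care that the reduced \Gr basis elements are homogeneous of degree equal to their leading terms, and that the staircase argument bounding the minimal generators is applied to the monomial ideal $\LT(\I)$ rather than to $\I$ itself, with Macaulay's theorem (as recalled in the paragraph following Proposition~\ref{hilbI}) providing the bridge that both ideals have vanishing graded pieces in the same degrees.
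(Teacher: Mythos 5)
Your proof is correct. Note, however, that the paper itself contains no proof of Thm.~\ref{maincor}: it is quoted directly from \cite[Prop.~4.8]{Monique} and later simply invoked as part $(a)$ of Lem.~\ref{lem0dim}, so your argument supplies a proof rather than paralleling one. What you do is essentially the same device the paper uses for the companion bound in Lem.~\ref{lem0dim}$(b)$: extract a regular sequence $g_1,\dots,g_n$ inside $\I$ via Lem.~\ref{Sec3:Lem1}, read off the Hilbert series of the complete intersection $\J=\li g_1,\dots,g_n\ri$, namely $\prod_{i=1}^{n}(1+t+\cdots+t^{d_i-1})$, conclude $\R_t=0$ for $t>d_1+\cdots+d_n-n$, and then convert this vanishing into a degree bound through Macaulay's theorem together with the staircase observation that a minimal generator of $\LT(\I)$ of degree at least $s+2$ would have a proper divisor already lying in $\LT(\I)$. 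All steps are sound; in particular your care about homogeneity of the reduced Gr\"obner basis under the graded order, and about working with $\LT(\I)$ rather than $\I$, is exactly what is needed. Two small points are worth making explicit. First, the application of Lem.~\ref{Sec3:Lem1} with $D=0$ tacitly requires $k\ge n$; this is automatic here, since a zero-dimensional ideal has height $n$ and so cannot be generated by fewer than $n$ polynomials, but it deserves a sentence. Second, you correctly write the denominator of the complete-intersection Hilbert series as $(1-t)^n$; the formula as printed in Section~2 of the paper contains the typo $(1-t^n)$. Your argument can also be seen as the zero-dimensional specialization of Prop.~\ref{laz2}: there the Hilbert function of $\P/\I$ stabilizes beyond $d_1+\cdots+d_n-n+1$, and in the Artinian case the stabilized value is $0$, which is precisely your vanishing statement; so your proof is both self-contained and consistent with the route the paper takes for the $\dim(\I)\le 1$ results in Section~6.
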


We conclude this section with a brief review of the theory of Pommaret
bases.  Suppose that $f\in \P$ and $\LT(f)=x^\alpha$ with
$\alpha=(\alpha_1,\ldots ,\alpha_n)$.  We call
$\max{\{i\mid\alpha_{i}\neq0\}}$ the \emph{class} of $f$, denoted by
$\cls({f})$. Then the \emph{multiplicative variables} of $f$ are
$\mult{\X}{P}{f}=\{x_{\cls({f})},\ldots ,x_n\}$.  Furthermore, $x^{\beta}$
is a \emph{Pommaret divisor} of $x^{\alpha}$, written
$x^{\beta}\mid_P x^{\alpha}$, if $x^{\beta}\mid x^{\alpha}$ and
$x^{\alpha-\beta}\in\kk[x_{\cls({f})},\ldots ,x_n]$.

\begin{definition}\label{def:pombas}
  Let $\H\subset \I$ be a finite set such that no leading term of an
  element of $\H$ is a Pommaret divisor of the leading term of another
  element. Then $\H$ is called a \emph{Pommaret basis} of $\I$ for $\prec$,
  if
  \begin{equation}\label{eq:pombas}
    \I=\bigoplus_{h\in\H}\kk[\mult{\X}{P}{h}]\cdot h.
  \end{equation}
\end{definition}

One can easily show that any Pommaret basis is a (generally non-reduced)
Gr\"obner basis of the ideal it generates.  The main difference between
Gr\"obner and Pommaret bases consists of the fact that by (\ref{eq:pombas})
any polynomial $f\in \I$ has a {\em unique} involutive standard
representation. If an ideal $\I$ possesses a Pommaret basis $\H$, then
$\reg(\I)$ equals the maximal degree of an element of $\H$, cf.\ \cite[Thm.
9.2]{wms:comb2}. The main drawback of Pommaret bases is however that they
do not always exist. Indeed, a given ideal possesses a finite Pommaret
basis, if and only if the ideal is in {\em quasi stable position} -- see
\cite[Prop 4.4]{wms:comb2}.

\begin{definition}
  A monomial ideal $\J$ in $\P$ is called \emph{quasi stable}, if for any
  term $m \in \J$ and all integers $i, j,s$ with $1 \le j < i \le n$ and
  $s>0$ such that $x_i^s\mid m$, there exists an exponent $t\ge 0$ such
  that $x_j^tm/x_i^s\in \J$. A homogeneous ideal $\I$ is in {\em quasi
    stable position}, if $\LT(\I)$ is quasi stable.
\end{definition}

In the sequel, we will use the following notations: given an ideal $\I$ in
quasi stable position, we write $\H = \{h_1 ,\ldots , h_s \}$ for its
Pommaret basis. Furthermore, for each $i$ we set $m_i = \LT(h_i )$ and it
is then easy to see that $\{m_1 , \ldots , m_s \}$ forms a Pommaret basis
of $\LT(\I)$.

\begin{remark}\label{rem} 
  Since any linear change of variables is a $\kk$-linear automorphism of
  $\P$ preserving the degree, it follows trivially that the dimensions over
  $\kk$ of the homogeneous components of a homogeneous ideal $\I$ or of its
  factor ring $\R$ remain invariant. Hence the Hilbert function and
  therefore also the Hilbert series, the Hilbert polynomial and the Hilbert
  regularity of $\I$ do not change.  The same is obviously true for the
  Castelnuovo-Mumford regularity. In addition, due to the special form of
  the Hilbert series of the ideal generated by a regular sequence, we
  conclude that any regular sequence remains regular after a linear change
  of variables and hence the depth is invariant, too.  Finally, we note
  that almost all linear changes of variables transform a given homogeneous
  ideal into quasi stable position (which is thus a generic position)
  \cite{{wms:comb2}}. It follows that to study any of the mentioned
  invariants of $\I$, w.l.o.g.\ we may assume that $\I$ is in quasi stable
  position.
\end{remark}

\section{Improving Giusti's upper bound}
\label{sec:4}

In 1984, Giusti \cite{Giusti1} established the upper bound $(2d)^{2^{n-1}}$
for $\deg(\I,\prec)$ in the case that the coordinates are in generic
position. The key point of Giusti's proof is the use of the combinatorial
structure of the generic initial ideal in characteristic zero. Later on,
Mora \cite[Ch.\ 38]{Mora}, by a deeper analysis of Giusti's proof, improved
this bound to $(d+1)^{(n-D)2^{D-\lambda}}$ where $\lambda$ is the depth of
$\I$.  In this section, we improve Mora's bound by following his general
approach and correcting some flaws in his method. Our presentation seems to
be simpler than the ones by Mora and Giusti.

We first note that for a given ideal in quasi stable position, we are able
to reduce the number of variables by the depth of the ideal to obtain a
sharper bound for $\deg(\I,\prec)$. A novel proof {\em \`a la Pommaret} of
this result is given below.

\begin{proposition}\label{propdepth}
  Suppose that $U(n,d,D)$ is a function depending in $n,d$ and $D$ so that
  $\deg(\I,\prec)\le U(n,d,D)$ for any ideal $\I$ which is in quasi stable
  position and is generated by homogeneous polynomials of degree at most
  $d$ in $n$ variables. Then, $\deg(\I,\prec)\le U(n-\lambda,d,D-\lambda)$
  where $\depth(\I)=\lambda$.
\end{proposition}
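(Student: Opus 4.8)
The plan is to exploit the Pommaret basis of $\I$ together with the invariance properties collected in Remark~\ref{rem}. By that remark we may assume without loss of generality that $\I$ is already in quasi stable position, so that it possesses a finite Pommaret basis $\H=\{h_1,\dots,h_s\}$ with $m_i=\LT(h_i)$. The pivotal input is that for an ideal in quasi stable position the depth is read off from the classes of the Pommaret generators: the $\lambda$ smallest variables $x_n,\dots,x_{n-\lambda+1}$ form a maximal regular sequence of linear forms on $\R=\P/\I$, and equivalently every element of $\H$ has class at most $n-\lambda$, i.e.\ $q:=\max_{h\in\H}\cls(h)=n-\lambda$. In particular each leading term $m_i$ involves only $x_1,\dots,x_{n-\lambda}$ and hence lies in the subring $\overline{\P}=\kk[x_1,\dots,x_{n-\lambda}]$.

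First I would form the ideal $\overline{\I}\subset\overline{\P}$ obtained by specialising $x_{n-\lambda+1}=\dots=x_n=0$, that is, the image of $\I$ under the projection $\P\to\overline{\P}$; it is generated by the images $\overline{f_1},\dots,\overline{f_k}$, which are homogeneous of degree at most $d$, so that $\overline{\I}$ meets the hypotheses attached to $U$ in the variable count $n-\lambda$. Next I would verify that the specialised generators $\overline{h_i}$ retain $\LT(\overline{h_i})=m_i$ (the leading term is untouched because it already avoids the killed variables, and the degree reverse lexicographic order only discards smaller terms) and that the involutive direct decomposition~(\ref{eq:pombas}) of $\I$ descends to the analogous decomposition of $\overline{\I}$ over $\overline{\P}$. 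This shows on one hand that $\overline{\I}$ is again in quasi stable position, and on the other that $\LT(\overline{\I})=\li m_1,\dots,m_s\ri$ inside $\overline{\P}$, so that $\LT(\I)$ and $\LT(\overline{\I})$ have literally the same minimal monomial generators.

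From the coincidence of the minimal generators of the two leading ideals I would deduce $\deg(\I,\prec)=\deg(\overline{\I},\prec)$, since in each case this quantity equals the maximal degree of a minimal generator of the respective leading ideal (the leading terms of a reduced Gr\"obner basis are exactly those minimal generators). For the dimension I would use that $x_n,\dots,x_{n-\lambda+1}$ is a regular sequence on $\R$: each such non-zero divisor in the irrelevant ideal lowers the Krull dimension by exactly one, whence $\dim(\overline{\P}/\overline{\I})=\dim\bigl(\R/\li x_{n-\lambda+1},\dots,x_n\ri\R\bigr)=D-\lambda$. Applying the hypothesis to $\overline{\I}$ --- an ideal in $n-\lambda$ variables, generated in degree at most $d$, of dimension $D-\lambda$, and in quasi stable position --- then gives $\deg(\I,\prec)=\deg(\overline{\I},\prec)\le U(n-\lambda,d,D-\lambda)$, which is the claim.

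The step I expect to be the real work is the identification $q=n-\lambda$ together with the compatibility of the Pommaret decomposition with the specialisation $\P\to\overline{\P}$. Establishing that the top $\lambda$ variables constitute a \emph{maximal} regular sequence, and that killing them neither merges distinct cones of the involutive decomposition nor produces new leading terms, is exactly where the quasi stable (Pommaret) structure must be invoked in an essential way; once this compatibility between $\H$ and the projection is secured, the remaining degree, dimension and generation bookkeeping is routine.
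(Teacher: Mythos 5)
Your proposal is correct and takes essentially the same route as the paper's proof, which likewise identifies the depth $\lambda$ with $n$ minus the maximal class of the Pommaret basis $\H$ (citing that the variables $x_{n-\lambda+1},\ldots,x_n$ form a regular sequence on $\R$ and divide no leading term of $\H$), specializes those variables to zero so that $\H$ descends to the Pommaret basis of the specialized ideal, and concludes $\deg(\I,\prec)=\deg(\overline{\I},\prec)$ with the dimension dropping to $D-\lambda$. The only difference is that you spell out the bookkeeping (the coincidence of the minimal generators of the two leading ideals and the dimension count via the regular sequence) that the paper compresses into ``this entails our claim.''
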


\begin{proof}
  Let $t$ be the maximal class of the elements in $\H$. It is shown in
  \cite[Prop 2.20]{wms:comb2} that in quasi-stable position the variables
  $x_{t+1},\ldots, x_n$ define a regular sequence on $\R$ and that thus
  $\lambda=n-t$ (note that this reference distinguishes between
  $\depth(\I)$ and $\depth(\R)$ with the two related by
  $\depth(\R)=\depth(\I)-1$; what we call here $\depth(\I)$ corresponds to
  $\depth(\R)$ in \cite{wms:comb2}). By definition of $t$, no leading term
  of an element of $\H$ is divisible by any of these variables. Thus
  $\tilde{\H}=\H|_{x_{t+1}=\cdots=x_n=0}$ is the Pommaret basis of the
  ideal $\tilde{\I}=\I|_{x_{t+1}=\cdots=x_n=0}$ in
  $\kk[x_{1},\ldots,x_{t}]$ and hence
  $\deg(\I,\prec)=\deg(\tilde{\I},\prec)$. This entails our claim.  $\qed$
\end{proof}

\begin{corollary}
  As a similar statement to Prop.\ \ref{propdepth}, suppose that $R(n,d,D)$
  is a function depending in $n,d$ and $D$ such that
  $\reg(\I)\le R(n,d,D)$. Then, $\reg(\I)\le R(n-\lambda,d,D-\lambda)$.
\end{corollary}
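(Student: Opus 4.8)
The corollary is essentially a regularity analogue of Proposition \ref{propdepth}, so the plan is to mirror that proof almost verbatim, since the argument there never actually used any special property of $\deg(\I,\prec)$ beyond the fact that it is preserved under the elimination of the depth-variables.

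First I would invoke \cite[Prop 2.20]{wms:comb2} exactly as in the proof of Proposition \ref{propdepth}: letting $t$ be the maximal class of the elements of the Pommaret basis $\H$, the variables $x_{t+1},\ldots,x_n$ form a regular sequence on $\R$, so that $\lambda=\depth(\I)=n-t$. Setting $\tilde{\I}=\I|_{x_{t+1}=\cdots=x_n=0}\subset\kk[x_1,\ldots,x_t]$, the set $\tilde{\H}=\H|_{x_{t+1}=\cdots=x_n=0}$ is again a Pommaret basis, this time of $\tilde{\I}$. The key point is then that $\reg(\I)=\reg(\tilde{\I})$: by the remark following Definition \ref{def:pombas}, whenever an ideal has a Pommaret basis its regularity equals the maximal degree of an element of that basis, and since passing from $\H$ to $\tilde{\H}$ merely substitutes $0$ for variables that do not divide any leading term, it leaves every degree unchanged. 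Thus $\reg(\I)$ and $\reg(\tilde{\I})$ are computed as the same maximal degree.

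With this reduction in hand, I would apply the hypothesis $\reg(\cdot)\le R(n',d,D')$ to the ideal $\tilde{\I}$, which lives in $n-\lambda$ variables. It remains to identify the dimension of $\tilde\I$: since $x_{t+1},\ldots,x_n$ is a regular sequence of length $\lambda$ on $\R$, quotienting by these linear forms drops the dimension by exactly $\lambda$, so $\dim(\tilde{\I})=D-\lambda$. Combining, $\reg(\I)=\reg(\tilde{\I})\le R(n-\lambda,d,D-\lambda)$, which is the claim.

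The one point that needs slightly more care than in Proposition \ref{propdepth} is the justification that $\reg(\I)=\reg(\tilde\I)$ rather than merely a statement about leading-term degrees: here I rely on the Pommaret-basis characterization of regularity cited after Definition \ref{def:pombas}, so I expect the main (minor) obstacle to be making the invariance of the maximal Pommaret-basis degree under the substitution $x_{t+1}=\cdots=x_n=0$ fully explicit. Given that the degrees of the surviving generators are manifestly unaffected by setting those non-dividing variables to zero, this causes no real difficulty, and the corollary follows.
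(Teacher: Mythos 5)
Your proposal is correct and follows essentially the same route as the paper: both reduce to $\tilde{\I}$ via the depth-many non-dividing variables exactly as in Prop.~\ref{propdepth}, and both justify $\reg(\I)=\reg(\tilde{\I})$ by the fact that in quasi stable position the regularity equals the maximal degree of the Pommaret basis, which is unchanged when passing from $\H$ to $\tilde{\H}$. Your extra verification that $\dim(\tilde{\I})=D-\lambda$ is a detail the paper leaves implicit but is entirely consistent with its argument.
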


\begin{proof}
  This claim follows by the same argument as in the proof of Prop.\
  \ref{propdepth} and using the facts that for each $f$ in the Pommaret
  bases $\H$ the corresponding element $\tilde{f}\in \tilde{\H}$ has the
  same degree as $f$ and in quasi stable position
  $\reg(\I)=\reg(\tilde{\I})$ is given by the maximal degree of the
  elements of $\H$ and $\tilde{\H}$. $\qed$
\end{proof}

To state the refined version of Giusti's bound, we need to recall the {\em
  crystallisation principle}.  Let $A=(a_{ij})\in \mathrm{GL}(n,\kk)$
be an $n\times n$ invertible matrix. By $A.\I$ we mean the ideal generated
by the polynomials $A.f$ with $f\in \I$ where
$A.f=f(\sum_{i=1}^{n}{a_{i1}x_i},\ldots,\sum_{i=1}^{n}{a_{in}x_i})$. The
following fundamental theorem is due to Galligo \cite{Gal1}.

\begin{theorem}\label{gal}
  There exists a non-empty Zariski open subset
  $\mathcal{U}\subset \mathrm{GL}(n,\kk)$ such that $\LT(A.\I)=\LT(A'.\I)$
  for all matrices $A,A'\in \mathcal{U}$.
\end{theorem}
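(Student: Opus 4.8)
The plan is to analyze the initial ideal one graded component at a time and then reassemble the pieces. Fix a degree $s$. Since $A.\I$ is obtained from $\I$ by a degree-preserving $\kk$-linear automorphism, $\dim_\kk (A.\I)_s = \dim_\kk \I_s =: c_s$ is independent of $A$. I choose once and for all a $\kk$-basis $v_1,\dots,v_{c_s}$ of $\I_s$ and order the monomials of $\P_s$ as $t_1 \succ \cdots \succ t_N$. Expanding $A.v_1,\dots,A.v_{c_s}$ in this monomial basis produces a $c_s \times N$ matrix $M_s(A)$ whose entries are polynomials in the entries $a_{ij}$ of $A$, because $A.f$ substitutes each $x_j$ by $\sum_i a_{ij}x_i$ and hence the coefficients are homogeneous of degree $s$ in the $a_{ij}$. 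The leading term set $\LT(A.\I)_s$ is exactly the set of pivot columns (the column rank profile) of $M_s(A)$: scanning from $t_1$ to $t_N$, the monomial $t_j$ is a leading term iff column $j$ is linearly independent from the columns to its left. This profile is governed entirely by the vanishing pattern of the maximal minors of $M_s(A)$.

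Next I build, for each $s$, a dense open set $\mathcal{U}_s \subseteq \mathrm{GL}(n,\kk)$ on which the profile is constant. The profile is selected greedily: having fixed pivots $j_1 < \cdots < j_{k-1}$, one takes $j_k$ to be the smallest index $> j_{k-1}$ for which columns $j_1,\dots,j_{k-1},j_k$ are independent. If, for a candidate index, every $k \times k$ minor on these columns vanishes identically as a polynomial in the $a_{ij}$, those columns are dependent for every $A$ and the index is skipped automatically; otherwise I fix a single minor $\mu_k$ that is not identically zero. Since $\mathrm{GL}(n,\kk)$ is irreducible, the nonvanishing locus of each $\mu_k$ is dense open, and $\mathcal{U}_s := \{A : \mu_1(A)\cdots \mu_{c_s}(A) \neq 0\}$ is a finite intersection of dense open sets, hence dense open. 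On $\mathcal{U}_s$ the greedy choices are realized exactly, so $\LT(A.\I)_s$ equals a fixed monomial set $J_s$ with $\#J_s = c_s$.

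It remains to pass from individual degrees to a single open set. For any fixed $s$ the intersection $\mathcal{U}_s \cap \mathcal{U}_{s+1}$ is nonempty, and on it the fact that $\LT(A.\I)$ is an ideal forces $x_i \cdot J_s \subseteq J_{s+1}$ for every variable $x_i$; hence the monomial sets $J_s$ are the graded components of a single monomial ideal $J$. By Dickson's lemma $J$ has finitely many minimal generators, all of degree at most some $s_0$. Set $\mathcal{U} := \bigcap_{s \le s_0}\mathcal{U}_s$, again a finite intersection of dense open sets and hence dense open. For $A \in \mathcal{U}$ one has $\LT(A.\I)_s = J_s$ for all $s \le s_0$, and since $\LT(A.\I)$ is an ideal containing every generator of $J$, the inclusion $J_s \subseteq \LT(A.\I)_s$ then holds in all degrees $s$. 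The Hilbert function is invariant under linear changes of variables (Remark~\ref{rem}), so $\dim_\kk \LT(A.\I)_s = \dim_\kk \I_s = c_s = \#J_s$ for every $s$; combined with the inclusion this forces $\LT(A.\I)_s = J_s$ in all degrees. Thus $\LT(A.\I) = J$ for all $A \in \mathcal{U}$, which is the assertion.

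I expect the genuine obstacle to be precisely this last passage from finitely many degrees to all degrees: the naive intersection $\bigcap_s \mathcal{U}_s$ ranges over infinitely many degrees and need not be Zariski open. The device that rescues the argument is the finite generation of the monomial ideal $J$ by Dickson's lemma, which caps the relevant degrees at $s_0$, together with the dimension count from the invariance of the Hilbert function, which upgrades the one-sided inclusion $J_s \subseteq \LT(A.\I)_s$ to equality in the higher degrees. Checking that the greedy minor selection really computes the column rank profile and that the skipped indices are forced (not merely generically) dependent is routine linear algebra that I would verify but not belabour.
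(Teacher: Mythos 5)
Your proof is correct. Note first that the paper contains no proof of this statement to compare against: it is quoted as Galligo's theorem with a citation to \cite{Gal1}. Your argument is essentially the standard proof from the literature (Green's lectures on generic initial ideals; Eisenbud, \emph{Commutative Algebra}, Thm.~15.18), with one cosmetic difference: where the usual presentation takes the $c_s$-th exterior power of $(A.\I)_s$ and selects the lexicographically largest Pl\"ucker coordinate that does not vanish identically, you work directly with the column rank profile of the coefficient matrix $M_s(A)$ via a greedy choice of not-identically-zero minors $\mu_k$. These are equivalent, since the Pl\"ucker coordinate indexed by a set of $c_s$ columns is exactly a maximal minor on those columns. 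The two genuinely delicate points are handled correctly: (i) the passage from infinitely many dense open sets $\mathcal{U}_s$ to a single open set, which you resolve exactly as the standard proof does --- assemble the generic sets $J_s$ into a monomial ideal $J$ (using irreducibility of $\mathrm{GL}(n,\kk)$ to intersect consecutive $\mathcal{U}_s$), cap the generating degrees at $s_0$ by Dickson's lemma, and upgrade the inclusion $J_s\subseteq\LT(A.\I)_s$ in degrees $s>s_0$ to equality via the count $\#\bigl(\LT(A.\I)\cap\TT\bigr)_s=\dim_\kk(A.\I)_s=c_s=\#J_s$; and (ii) the verification that on $\mathcal{U}_s$ a skipped column is genuinely a non-pivot, which requires the easy inductive observation that the span of \emph{all} columns to the left of a candidate equals the span of the pivot columns among them --- you flag this and it is indeed routine. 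One fact you use silently and should state: for a homogeneous ideal, the degree-$s$ monomials of the ideal $\LT(A.\I)$ are exactly the leading terms of elements of $(A.\I)_s$ (needed both for the rank-profile description and for the step $x_i\cdot J_s\subseteq J_{s+1}$); this is standard but worth a line.
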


\begin{definition}
  The monomial ideal $\LT(A.\I)$ with $A\in \mathcal{U}$ and $\mathcal{U}$
  as given in Theorem \ref{gal} is called {\em the generic initial ideal}
  of $\I$ (w.r.t.\ $\prec$) and is denoted by $\gin({\I})$.
\end{definition}

Suppose that $\I=\langle f_1,\ldots, f_k\rangle$ and that for some
$s\in\NN$ we have $\deg(f_i)\le s$ for all $i$ and $\gin(\I)$ has no
minimal generator in degree $s+1$. Then, the crystallisation principle (CP)
states that for each $m$ in the generating set of $\gin(\I)$ we have
$\deg(m)\le s$, see \cite[Prop 2.28]{Green}. Note that this principle holds
only in characteristic zero and it has been proven only for generic initial
ideals and for lexicographic ideals (see \cite[Thm. 3.8]{Green}).

Giusti's proof consists in applying this property along with an induction
on the number of variables. One crucial fact in this direction is that CP
also holds for a generic initial ideal modulo the last variable. Below, we
will show that both properties remain true for arbitrary {\em strongly
  stable} ideals.

\begin{definition}
  A monomial ideal $\J$ is called \emph{strongly stable}, if for any term
  $m \in \J$ we have $x_jm/x_i\in \J$ for all $i$ and $j$ such that $j<i$ and
  $x_i$ divides $m$. A homogeneous ideal $\I$ is in {\em strongly stable
    position}, if $\LT(\I)$ is strongly stable.
\end{definition}

\begin{proposition}\label{popstrong}
  Let $\I$ be in strongly stable position. Then, CP holds for $\LT(\I)$. 
\end{proposition}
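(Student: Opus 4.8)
The plan is to derive the crystallisation principle directly from the combinatorial structure of strongly stable ideals, following the general line of Giusti and Mora but anchoring each step in strong stability rather than in genericity. Write $\J=\LT(\I)$ and let $s$ be as in the statement of CP, so that every generator $f_i$ satisfies $\deg f_i\le s$ and $\J$ has no minimal generator in degree $s+1$. The conclusion---that every minimal generator of $\J$ has degree $\le s$---is equivalent to the family of equalities $\J_{e+1}=\mf\J_e$ for all $e\ge s$, and the degree-$(s+1)$ hypothesis is precisely the case $e=s$. I would therefore argue by induction on $e\ge s$. The one genuinely global input comes from the assumption that $\I$ itself (and not merely $\J$) is generated in degrees $\le s$: this gives $\I_{e+1}=\mf\I_e$ for every $e\ge s$, and since leading terms form a $\kk$-basis of each graded component, $\dim_\kk\J_{e+1}=\dim_\kk\mf\I_e$. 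This input is indispensable; for a strongly stable ideal that is \emph{not} realised as $\LT(\I)$ of a low-degree ideal a single missing generator need not persist (e.g.\ $\langle x_1^2\rangle$ enlarged by all terms of some higher degree), so strong stability of $\J$ alone cannot suffice.

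Next I would record the description of minimal generators supplied by strong stability: a monomial $m\in\J$ is a minimal generator if and only if $m/x_{\cls(m)}\notin\J$. Only the backward implication needs an argument. If $m/x_i\in\J$ for some $x_i\mid m$ with $i<\cls(m)$, then $x_{\cls(m)}$ still divides $m/x_i$, and the strongly stable exchange replacing $x_{\cls(m)}$ by $x_i$ carries $m/x_i$ to $m/x_{\cls(m)}\in\J$, a contradiction. This characterisation reduces the whole problem to understanding the set $\J_{e+1}\setminus\mf\J_e$, whose elements are exactly the new minimal generators in degree $e+1$.

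For the inductive step I would assume $\J_{j+1}=\mf\J_j$ for $s\le j\le e-1$, so that $\J_{\le e}$ is generated in degrees $\le s$, and suppose for contradiction that some $m\in\J_{e+1}\setminus\mf\J_e$ exists with $e\ge s+1$. Choosing $f\in\I_{e+1}$ with $\LT(f)=m$ and writing $f=\sum_j x_j g_j$ with $g_j\in\I_e$ (possible since $\I_{e+1}=\mf\I_e$), the fact that $m\notin\mf\J_e$ forces $\LT(f)\prec\max_j x_j\LT(g_j)$: the top terms of the representation cancel, which is exactly the mechanism by which $\LT(\I)$ can acquire generators above the generating degree of $\I$. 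The aim is then to use strong stability to transport this cancellation downward---applying index-lowering exchanges to the monomials $x_j\LT(g_j)$ and to $m$, and invoking the induction hypothesis that $\J_{\le e}$ is generated in low degree---until one manufactures a new minimal generator in degree $s+1$, contradicting the base hypothesis.

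The main obstacle is precisely this descent. Controlling the leading-term cancellations and verifying that the strongly stable exchanges genuinely push the obstruction down to degree $s+1$ is the delicate combinatorial core; it is the place where Giusti and Green invoke genericity and where Mora's purely combinatorial substitute contained flaws. I would handle it by exploiting the \emph{involutive} (Pommaret) standard representation guaranteed by strong stability: because each element of $\I$ has a \emph{unique} such representation, the contributions $x_j g_j$ and the resulting cancellations are rigidly determined, which is what allows the exchange argument to be carried through without ambiguity. Once the descent is established, the induction closes and every minimal generator of $\J$ has degree at most $s$.
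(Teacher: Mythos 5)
Your framework is sound up to a point: the reformulation of CP as $\J_{e+1}=\mf\J_e$ for all $e\ge s$, the observation that $\I_{e+1}=\mf\I_e$ is the indispensable global input, the characterisation of minimal generators of a strongly stable ideal via division by $x_{\cls(m)}$, and the deduction that a minimal generator $m\in\J_{e+1}\setminus\mf\J_e$ forces a cancellation $\LT(f)\prec\max_j x_j\LT(g_j)$ are all correct. But the proof stops exactly where the proposition begins: the ``descent'' transporting that cancellation from degree $e+1$ down to a new minimal generator in degree $s+1$ (or anywhere in $[s+1,e]$, which would already contradict your induction hypothesis) is stated as an \emph{aim}, not proved. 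Worse, the mechanism you offer to close it does not work: the uniqueness of involutive standard representations concerns representations with respect to the Pommaret basis $\H$ of $\I$ restricted to multiplicative variables, whereas your representation $f=\sum_j x_jg_j$ with $g_j\in\I_e$ is an arbitrary one coming from $\I_{e+1}=\mf\I_e$; such representations are highly non-unique (they differ by syzygies), so the cancellations are \emph{not} ``rigidly determined.'' There is also a circularity risk: in strongly stable position the Pommaret basis coincides with the reduced Gr\"obner basis (Mall's theorem, cited in the paper), so its elements reach degree $\deg(\I,\prec)$ --- precisely the quantity being bounded --- and any argument presupposing access to it above degree $s$ assumes what is to be shown.

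The paper avoids this bookkeeping entirely. It runs Buchberger's algorithm with the normal strategy and supposes the basis $G$ computed through degree $s$ acquires nothing in degree $s+1$; it then proves a purely combinatorial claim about the strongly stable set $M_s=\langle\LT(G)\rangle_s\cap\TT$: for any two terms $x^\alpha\neq x^\beta$ in $M_s$ with $\deg(\lcm(x^\alpha,x^\beta))>s+1$, the exchange term $x^\gamma=x_ix^\alpha/x_j$ (with $j$ the last index where the exponents differ, $\alpha_j>\beta_j$, and $i<j$ chosen with $\beta_i>\alpha_i$) lies in $M_s$, divides $\lcm(x^\alpha,x^\beta)$, and satisfies $\deg(\lcm(x^\gamma,x^\alpha))=s+1$ and $\deg(\lcm(x^\gamma,x^\beta))=\deg(\lcm(x^\alpha,x^\beta))-1$. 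Buchberger's chain criterion then renders every S-pair of lcm-degree greater than $s+1$ superfluous, so the degree-$(s+1)$ hypothesis already certifies that $G$ is a Gr\"obner basis. Note that it replaces your open descent problem by a single, explicit, one-step exchange verified by inspecting lcm degrees --- this is the concrete combinatorial content your proposal is missing, and without it (or an equivalent) the proposal does not constitute a proof.
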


\begin{proof}
  The following arguments are inspired by \cite[page 728]{Mora}.  Let us
  consider an integer $s\ge d$.  Suppose that we are computing a Gr\"obner
  basis of $\I$ using Buchberger's algorithm and by applying the normal
  strategy. In addition, assume that we have already computed the set
  $G=\{g_1,\ldots g_t\}$ up to degree $s$ (this set will be enlarged to a
  Gr\"obner basis of $\I$), and there is no new polynomial of degree $s+1$
  to be added into $G$. Note that we have chosen $s\ge d$ to be sure that
  $G$ generates $\I$. To prove the assertion, it suffices to show that $G$
  is a G\"obner basis of $\I$.

  We introduce the set $M_{s}=\langle\LT(G)\rangle_{s}\cap\TT$.  We now
  claim that for each pair of terms
  $x^\alpha=x_1^{\alpha_1}\cdots x_n^{\alpha_n}\neq
  x^\beta=x_1^{\beta_1}\cdots x_n^{\beta_n}$ in it either
  $\deg(\lcm(x^\alpha,x^\beta))=s+1$ or there exists a further term
  $x^\gamma\in M_s\setminus\{x^\alpha,x^\beta\}$ such that
  \begin{itemize}
  \item $x^\gamma\mid \lcm(x^\alpha,x^\beta)$,
  \item $\deg(\lcm(x^\gamma,x^\alpha))< \deg(\lcm(x^\alpha,x^\beta))$,
  \item $\deg(\lcm(x^\gamma,x^\beta))< \deg(\lcm(x^\alpha,x^\beta))$.
  \end{itemize}
  If this claim is true, then Buchberger's second criterion implies that it
  suffices to consider those pairs $\{g_i,g_j\}$ with
  $\deg(\lcm(\LT(g_i),\LT(g_j)))=s+1$. If for each such pair the
  corresponding S-polynomial reduces to zero, then $G$ is a Gr\"obner basis
  and we are done. Otherwise, there exists a new generator of degree $s+1$
  contradicting the made assumptions.

  For proving the made claim, it suffices to show that, if
  $\deg(\lcm(x^\alpha,x^\beta))> s+1$, then there exists a term
  $x^\gamma\in M_s\setminus\{x^\alpha,x^\beta\}$ satisfying the above
  conditions. Let $j$ be an integer such that $\alpha_j\ne \beta_j$ and
  $\alpha_{j+1}= \beta_{j+1},\ldots,\alpha_{n}= \beta_{n}$. W.l.o.g., we
  may assume that $\alpha_j>\beta_j$. Since $x^\alpha$ and $x^\beta$ have
  the same degree, there is an index $i<j$ such that $\beta_i>\alpha_i$.
  The strongly stable position of $\I$ implies that $M_s$ is a strongly
  stable set. Therefore the term $x^\gamma=x_ix^\alpha/x_j$ satisfies
  $x^\gamma\in M_s\setminus\{x^\alpha,x^\beta\}$ and
  $x^\gamma\mid \lcm(x^\alpha,x^\beta)$. Furthermore,
  $\deg(\lcm(x^\gamma,x^\alpha))=s+1<\deg(\lcm(x^\alpha,x^\beta))$ and
  $\deg(\lcm(x^\gamma,x^\beta))= \deg(\lcm(x^\alpha,x^\beta))-1$. $\qed$
\end{proof}

\begin{example}\label{ex:green}
  One should note that strong stability of the leading term ideal does not imply
  that it is the generic initial ideal, as the following example due to
  Green \cite{Green} shows:
  $\I=\langle x_1x_3,x_1x_2+x_2^2,x_1^2\rangle\subset
  \kk[x_1,x_2,x_3]$. Its leading term ideal
  $\LT(\I)=\langle x_1x_3,x_1x_2,x_1^2,x_2^2x_3,x_2^3\rangle$ is strongly
  stable, but we find
  $\gin(\I)=\langle x_2^2,x_1x_2,x_1^2,x_1x_3^2\rangle\neq
  \LT(\I)$. Nevertheless, it is clear that both $\LT(\I)$ and $\gin(\I)$
  satisfy CP.
\end{example}

As a consequence of the proof of this proposition, we can infer a
generalization of CP.

\begin{corollary}
  Suppose we know in advance that $\I$ is in strongly stable position. Let
  us fix an integer $t$ (not necessarily greater than $d$). Suppose that we
  are computing a Gr\"obner basis for $\I$ using Buchberger's algorithm and
  applying the normal strategy. Assume that we have treated all
  S-polynomials of degree at most $t$ and $G_t$ is the set of all
  polynomials computed so far. If all S-polynomials of degree $t+1$ reduce
  to zero, then any critical pair $\{f,g\}$ with
  $\max\{\deg(f),\deg(g)\}\le t$ is superfluous. In particular, $G_t$ is a
  Gr\"obner basis for $\langle\I_{\leq t}\rangle$.
\end{corollary}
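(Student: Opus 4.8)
The statement generalizes the crystallisation principle (Proposition \ref{popstrong}) from the threshold $s \ge d$ to an arbitrary degree $t$. Let me understand it.

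The original CP proof: if we're computing a Gröbner basis via Buchberger with normal strategy, compute $G$ up to degree $s$, and no new polynomial of degree $s+1$ appears, then $G$ is already a Gröbner basis (because for pairs with $\deg(\text{lcm}) > s+1$, we can use Buchberger's second criterion via the $x^\gamma$ term).

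The corollary: fix arbitrary $t$. Compute all S-polynomials of degree $\le t$, get $G_t$. If all S-polynomials of degree $t+1$ reduce to zero, then any critical pair $\{f,g\}$ with $\max\{\deg(f),\deg(g)\} \le t$ is superfluous. In particular $G_t$ is a Gröbner basis for $\langle \mathcal{I}_{\le t}\rangle$.

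So the key claim in the proof of Proposition \ref{popstrong} was: for terms $x^\alpha \ne x^\beta$ in $M_s$, either $\deg(\text{lcm}) = s+1$ or there's a $x^\gamma$ witnessing Buchberger's second criterion. The proof of this claim didn't actually use $s \ge d$; the $s \ge d$ was only used to ensure $G$ generates $\mathcal{I}$.

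**My proof plan.**

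First, I would observe that the combinatorial claim at the heart of Proposition \ref{popstrong}'s proof is purely about the strongly stable set $M_s = \langle \LT(G)\rangle_s \cap \TT$, and its validity does not depend on whether $s \ge d$. It only uses that $M_s$ is strongly stable. For arbitrary $t$, set $M_t = \langle \LT(G_t)\rangle_t \cap \TT$; since $\mathcal{I}$ is in strongly stable position, $\LT(G_t)$ generates a strongly stable set (note: the leading terms computed so far generate the degree-$t$ part of the leading ideal of $\langle \mathcal{I}_{\le t}\rangle$), so the same $x^\gamma = x_i x^\alpha / x_j$ construction applies. Hence for any pair of leading terms in degree $t$ whose lcm has degree $> t+1$, Buchberger's second criterion lets us discard the corresponding critical pair.

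Second, I would translate this into the language of critical pairs. Any critical pair $\{f,g\}$ with $\max\{\deg(f),\deg(g)\} \le t$ has $\deg(\text{lcm}(\LT(f),\LT(g))) \ge t$ (since each leading term has degree $\le t$ but the lcm is at least as large as either). If the lcm has degree exactly $t+1$, it is handled by hypothesis (these S-polynomials reduce to zero). If the lcm has degree $> t+1$, the combinatorial claim supplies an intermediate generator making the pair superfluous by Buchberger's second criterion. The only remaining case, $\deg(\text{lcm}) = t$, forces $\LT(f) = \LT(g)$ as terms, which cannot both occur in a reduced setup; in the Buchberger run with normal strategy such coincidences are already resolved. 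Thus every critical pair with both degrees $\le t$ is superfluous.

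Third, for the ``in particular'' clause, I would argue that once all critical pairs among $G_t$ are superfluous, Buchberger's criterion guarantees $G_t$ is a Gröbner basis of the ideal $\langle G_t\rangle$ it generates. Since $G_t$ was produced by reducing all S-polynomials of degree $\le t$ of the original generators (whose degrees are $\le d$, but more relevantly, $G_t$ spans $\mathcal{I}_{\le t}$ by the normal strategy up to degree $t$), we have $\langle G_t\rangle = \langle \mathcal{I}_{\le t}\rangle$, so $G_t$ is a Gröbner basis of the latter.

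**Main obstacle.** The delicate point is the bookkeeping on \emph{which} ideal $G_t$ is a Gröbner basis \emph{of} when $t < d$: here $G_t$ need not generate $\mathcal{I}$, only $\langle \mathcal{I}_{\le t}\rangle$, so I must be careful that $M_t = \langle \LT(G_t)\rangle_t \cap \TT$ is strongly stable \emph{as a subset of the degree-$t$ leading terms of this smaller ideal}, and that strong stability of $\LT(\mathcal{I})$ restricts correctly to degree $t$. The combinatorial $x^\gamma$-argument transfers verbatim, so the real care is purely in the ideal-theoretic identification, not in any new computation.
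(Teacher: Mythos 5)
Your proposal follows exactly the route the paper intends: the paper gives no separate proof of this corollary, deriving it directly from the proof of Proposition~\ref{popstrong}, and you correctly identify that the hypothesis $s\ge d$ there is used only to guarantee that $G$ generates $\I$, while the combinatorial $x^\gamma$-claim (strong stability of the degree-$t$ slice of the leading ideal, plus Buchberger's second criterion) is degree-independent; your bookkeeping that $G_t$ is then a basis of $\langle\I_{\le t}\rangle$ rather than of $\I$ is precisely the point of the generalization. One local misstatement: your claim that every critical pair with $\max\{\deg(f),\deg(g)\}\le t$ has $\deg(\lcm(\LT(f),\LT(g)))\ge t$ is false (the lcm degree is only bounded below by $\max\{\deg(f),\deg(g)\}$, which may be well under $t$), and likewise $\deg(\lcm)=t$ does not force $\LT(f)=\LT(g)$ unless both polynomials have degree $t$. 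The repair is immediate and does not change the argument: for homogeneous inputs the S-polynomial degree equals the lcm degree, so any pair with lcm degree $\le t$ has already been treated by hypothesis, pairs with lcm degree $t+1$ reduce to zero by hypothesis, and pairs with lcm degree $>t+1$ are discarded by the chain criterion via the strongly stable $x^\gamma$-construction, exactly as in the proposition.
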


In the sequel, for an index $i$ we denote by $\I_i$ the ideal
$\I|_{x_i=\cdots= x_n=0}\subset \kk[x_1,\ldots,x_{i-1}]$.  Since we assume
that $\prec$ is the degree reverse lexicographic term order, strongly
stable position of $\I$ entails that $\I_{i}$ is in strongly stable
position, too, for any index $i$. The essence of Giusti's approach consists
of finding, by repeated evaluation, relations between $\deg(\I,\prec)$ and
$\deg(\I_i,\prec)$ for $i=n,\ldots, n-D+1$. For this purpose, we introduce
some further notations for an ideal $\I$ in strongly stable position. We
denote by $N(\I)$ the set of all terms $m \notin \LT(\I)$.  If
$\dim(\I)=0$, then we define $F(\I)=N(\I)$. Otherwise we set
$F(\I)=\{\tau x_n^a \in N(\I)\ | \ \tau \in F(\I_n) \text{ and }
\deg(\tau x_n^a)< \deg(\I,\prec)\}$.  Since $\I$ is in strongly stable
position, $N(\I)$ is strongly stable for the reverse ordering of the
variables. More precisely, if $x^{\alpha}\in N(\I)$ with $\alpha_{i}>0$,
then we claim that $x_{j}x^{\alpha}/x_{i}\in N(\I)$ for any $j>i$. Indeed,
otherwise it belonged to $\LT(\I)$ and thus -- since $\LT(\I)$ is strongly
stable -- $x^{\alpha}\in \LT(\I)$ which is a contradiction.

\begin{lemma}\label{Mora}
  Suppose that $\I$ is in strongly stable position. Then the following
  statements hold.
  \begin{itemize}
  \item[$(a)$] $\deg(\I,\prec)\le \max\{d,\deg(\I_n,\prec)\}+\#F(\I_n)$,
  \item[$(b)$] $\#F(\I)\le \bigl(\max\{d,\#F(\I_n)\}\bigr)^2$.
  \end{itemize}
  (Here $\#X$ denotes the cardinality of a finite set $X$.)
\end{lemma}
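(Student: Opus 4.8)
The plan is to read both inequalities off the staircase of $\J=\LT(\I)$. Write $\delta=\deg(\I,\prec)$, the top degree of a minimal generator of $\J$, and $\delta_n=\deg(\I_n,\prec)$, the analogous quantity for $\J_n=\LT(\I_n)=\J|_{x_n=0}$. For $(a)$ the bound is trivial when $\delta\le\max\{d,\delta_n\}$, so assume $\delta>\max\{d,\delta_n\}$. The engine is the crystallisation principle (Proposition~\ref{popstrong}) in contrapositive form: if there is a minimal generator of $\J$ in some degree larger than $s\ge d$, then there is one in degree $s+1$. Applied repeatedly, this forces a minimal generator of $\J$ in \emph{every} degree $e$ of the interval $\bigl(\max\{d,\delta_n\},\delta\bigr]$.

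For each such $e$ I would fix a minimal generator $w_e$ of degree $e$. Since $e>\delta_n$ and every $x_n$-free minimal generator of $\J$ is already a minimal generator of $\J_n$ (hence of degree $\le\delta_n$), the generator $w_e$ must involve $x_n$; write $w_e=\tau_e x_n^{a_e}$ with $a_e\ge1$. Two minimal generators sharing the $x_n$-free part would be comparable under divisibility, so $e\mapsto\tau_e$ is injective, and it only remains to place the $\tau_e$ inside $F(\I_n)$. This is the one genuine technical point, which I would isolate as a claim proved by induction on the number of variables: \emph{the $x_n$-free part $\tau$ of any minimal generator $\tau x_n^a$ with $a\ge1$ of a strongly stable leading ideal lies in $F(\I_n)$}. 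Minimality gives $\tau x_n^{a-1}\in N(\I)$, hence $\tau\in N(\I_n)$, which already settles the case $\dim\I_n=0$ because there $F(\I_n)=N(\I_n)$.

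The positive-dimensional step of that claim rests on a strong-stability \emph{forcing} argument, which I expect to be the main obstacle. Writing $\tau=\bar\tau x_{n-1}^b$, pushing the $x_n$'s down to $x_{n-1}$ places $\bar\tau x_{n-1}^{b+a}$ in $\J_n$, so the $\bar\tau$-column enters $\J_n$ at a threshold $c^\ast$ with $b<c^\ast\le b+a$. Strong stability then forbids any proper divisor of $\bar\tau x_{n-1}^{c^\ast}$ from generating it: a smaller divisor, after pushing its $x_{n-1}$'s down into the variables of $\bar\tau$, would force $\tau\in\J$, contradicting $\tau\in N(\I)$. Hence $\bar\tau x_{n-1}^{c^\ast}$ is itself a minimal generator of $\J_n$ involving $x_{n-1}$. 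This yields at once $\deg\tau<\delta_n$ (the degree condition in the definition of $F$), while the induction hypothesis applied to this generator places $\bar\tau\in F(\I_{n-1})$; together these give $\tau\in F(\I_n)$. Counting the injection $e\mapsto\tau_e$ then reads $\delta-\max\{d,\delta_n\}\le\#F(\I_n)$, which is exactly $(a)$. This is precisely the step where strong stability, rather than mere quasi-stability, is indispensable.

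For $(b)$ I would decompose $F(\I)$ into its $x_n$-columns, $\#F(\I)=\sum_{\tau\in F(\I_n)}c(\tau)$, where $c(\tau)$ is the number of exponents $a$ with $\tau x_n^a\in N(\I)$ and $\deg(\tau x_n^a)<\delta$. The number of nonempty columns is at most $\#F(\I_n)$, and each $c(\tau)$ is bounded both by the height of the $\tau$-column in $N(\I)$ and by the degree budget $\delta-\deg\tau$. Feeding in part $(a)$ together with the auxiliary monotonicity $\delta_n\le\#F(\I_n)$ (which holds because the crystallisation principle applied to $\I_n$ forces a standard monomial in every degree below $\delta_n$) bounds the relevant exponent range by $\max\{d,\#F(\I_n)\}$, and multiplying the two estimates produces $(\max\{d,\#F(\I_n)\})^2$. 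The delicate point here is that a crude product of ``number of columns'' and ``maximal column length'' loses a constant factor, since a single column (for instance the one over $\tau=1$) can already be longer than $\max\{d,\#F(\I_n)\}$. The clean quadratic instead requires exploiting the triangular region cut out by the constraint $\deg(\tau x_n^a)<\delta$ — equivalently, the fact that the long columns occur only over the few bases lying in the saturation $\J:x_n^\infty$, so that the remaining bases cannot all support long columns simultaneously.
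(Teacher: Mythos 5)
Your part $(a)$ is, in essence, the paper's own proof: the same use of the crystallisation principle (Prop.~\ref{popstrong}) to force a minimal generator of $\LT(\I)$ in every degree of the gap $\bigl(\max\{d,\deg(\I_n,\prec)\},\deg(\I,\prec)\bigr]$, the same observation that these generators must involve $x_n$ and have pairwise distinct $x_n$-free parts $\tau_e$, and the same recursive strong-stability threshold argument placing each $\tau_e$ in $F(\I_n)$ together with the degree bound $\deg\tau_e<\deg(\I_n,\prec)$. (The paper splits off the greatest variable actually occurring in $\tau_e$, you always split off $x_{n-1}$; given the recursive definition of $F$ your variant is fine. One small repair: your one-line claim that a proper divisor $\sigma x_{n-1}^{c^\ast}$ of $\bar\tau x_{n-1}^{c^\ast}$ would ``force $\tau\in\J$'' only works when $\deg(\bar\tau/\sigma)\ge c^\ast-b$; otherwise pushing down exhausts $\bar\tau/\sigma$ and the contradiction is instead with the minimality of $c^\ast$, since one obtains $\bar\tau x_{n-1}^{c'}\in\LT(\I_n)$ with $c'<c^\ast$.)

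Part $(b)$, however, contains a genuine gap: the claimed inequality is never proved. By your own accounting, the column decomposition gives only $\#F(\I)\le\#F(\I_n)\cdot\bigl(\max\{d,\deg(\I_n,\prec)\}+\#F(\I_n)\bigr)\le 2\bigl(\max\{d,\#F(\I_n)\}\bigr)^2$, because when $\dim(\I)>0$ the single column over $\tau=1$ already has height $\deg(\I,\prec)$, which part $(a)$ bounds only by roughly twice $\max\{d,\#F(\I_n)\}$. The promised repair --- ``exploiting the triangular region'' via the saturation --- is a gesture, not an argument: you give no estimate showing that the triangular sum $\sum_{\tau\in F(\I_n)}\bigl(\deg(\I,\prec)-\deg\tau\bigr)$ actually collapses to the clean square, and the stated dichotomy is inverted (the unbounded columns sit over the bases \emph{outside} $\LT(\I):x_n^\infty$, while bases in the saturation but outside $\LT(\I)$ carry the finite columns). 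The paper argues transposedly: it slices $F(\I)$ by the $x_n$-exponent, bounds each slice $F_\delta(\I)$ by $\#F(\I_n)$ (replaced by $\max\{d,\#F(\I_n)\}$ because CP only controls degrees $\ge d$), and then bounds the \emph{number} of slices by the observation that the maximal $x_n$-exponent occurring in $F(\I)$ is at most $\max\{d,\#F(\I_n)\}$; it is exactly this exponent bound --- the analogue of your missing step --- that your proposal replaces with an admitted placeholder. A secondary flaw: your auxiliary inequality $\deg(\I_n,\prec)\le\#F(\I_n)$ is true but misjustified, since CP produces minimal generators in consecutive degrees, not standard monomials; the quick correct proof is that $x_{n-1}^t\in F(\I_n)$ for every $t<\deg(\I_n,\prec)$ when $\dim(\I_n)>0$, resp.\ that proper divisors of a top-degree minimal generator lie in $N(\I_n)=F(\I_n)$ when $\dim(\I_n)=0$. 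Even with that fixed, the factor of $2$ remains, so as written your proof establishes a weaker statement than $(b)$.
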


\begin{proof}
  $(a)$ Let $G$ be the reduced Gr\"obner basis of $\I$ for $\prec$.
  Because of our use of the degree reverse lexicographic term order, we
  easily see that $G|_{x_n=0}$ is the reduced Gr\"obner basis of $\I_{n}$
  for $\prec$. Let $G'\subset G$ be the subset of all polynomials in $G$ of
  maximal degree.  We distinguish two cases. If
  $\LT(G')\cap \kk[x_1,\ldots ,x_{n-1}]\ne \emptyset$, then obviously
  $\deg{(\I,\prec)} = \deg(\I_n,\prec)$ and the assertion is
  proved. 

  Otherwise, CP (applicable by Prop.~\ref{popstrong}) implies that for each
  degree $\max{\{d,\deg(\I_n,\prec)\}}< i\le \deg(\I,\prec)$ there exists a
  polynomial $g_i\in G$ with $\deg(g_i)=i$ (note that if $\deg(\I,\prec)=d$
  then $(a)$ holds and we are done).  Thus, we can write $\LT(g_i)$ in the
  form $x_n^{a_i}\tau_i$ with $a_i>0$ and
  $\tau_i\in\kk[x_1,\ldots ,x_{n-1}]$. We claim that $\tau_i\in
  F(\I_n)$. Writing
  $\tau_i=x_{i_1}^{\alpha_{i_1}}\cdots x_{i_k}^{\alpha_{i_k}}$ where
  $i_1<\cdots <i_k$, we may conclude by the assumed reducedness of $G$ that
  $\tau_{i}\notin \LT(\I)$ and by the strong stability of $\LT(\I)$ that
  $x_{i_1}^{\alpha_{i_1}}\cdots x_{i_k}^{\alpha_{i_k}+a_i}\in
  \LT(\I)$. Hence there exists an integer $a>0$ such that
  $x_{i_1}^{\alpha_{i_1}}\cdots x_{i_k}^{\alpha_{i_k}+a-1}\notin \LT(\I)$
  and $x_{i_1}^{\alpha_{i_1}}\cdots x_{i_k}^{\alpha_{i_k}+a}\in
  \LT(\I)$. It follows that there exists a generator $g\in G\cap\I_{n}$
  such that its leading term
  $\LT(g)=x_{i_1}^{\beta_{i_1}}\cdots x_{i_k}^{\beta_{i_k}}$ divides the
  latter term.  We must have $\beta_\ell\le \alpha_\ell$ for each
  $\ell<i_k$ and $\beta_{i_k}=\alpha_{i_k}+a$ by definition of
  $a$. Furthermore, the strong stability of $\LT(\I)$ implies that
  $\deg(g)> \deg(\tau_i)$, as otherwise another generator $g'\in G$
  existed with $\LT(g')\mid\tau_{i}$. Thus
  $\deg(\tau_i)<\deg(\I_{n},\prec)$.  If we write
  $\tau_{i}=\bar{\tau}_{i}x_{i_{k}}^{\alpha_{i_{k}}}$, then there only
  remains to show that $\bar{\tau}_{i}\in F(\I_{i_{k}})$, as
  $\tau_{i}\in N(\I_{n})$ is a trivial consequence of $\tau_{i}\in N(\I)$.
  If $\dim(\I_{i_{k}})=0$, this follows immediately from
  $F(\I_{i_{k}})=N(\I_{i_{k}})$.  Otherwise we repeated the same arguments
  as above.

  Thus for each $i$ with $\max\{d,\deg(\I_n,\prec)\}< i\le \deg(\I,\prec)$
  there exists a generator $g_i\in G$ such that $\LT(g_i)=x_n^{a_i}\tau_i$
  and $\tau_i\in F(\I_n)$.  Since $G$ is reduced, the terms $\tau_i$ are
  pairwise different. Hence
  $\deg(\I,\prec)-\max{\{d,\deg(\I_n,\prec)\}} \le \#F(\I_n)$ and this
  proves $(a)$.

  To show $(b)$, we introduce for each degree $\delta\in\NN$ the subset
  $F_{\delta}(\I)=\{x_n^\delta\tau \ | \ x_n^\delta\tau \in F(\I) \}$.  By
  definition, $x_n^\delta\tau\in F_{\delta}(\I)$ implies $\tau \in F(\I_n)$
  and thus $\#F_{\delta}(\I)\le \#F(\I_n)$. Since we used in the proof of
  $(a)$ CP, the claims proven there are true only for polynomials of degree
  at least $d$. Thus in the sequel we shall replace $\#F(\I_n)$ by
  $\max\{d,\#F(\I_n)\}$. We observe that the maximal $\delta$ such that
  $x_n^\delta\tau\in F(\I)$ is $\max\{d,\#F(\I_n)\}$ and thus
  \begin{displaymath}
    \#F(\I)\le\sum_{\delta=0}^{\max\{d,\#F(\I_n)\}-1}{\max\{d,\#F(\I_n)\}}
  \end{displaymath}
  which immediately yields the inequality in $(b)$.
\end{proof}

\begin{remark}
  Mora \cite[Thm.~38.2.7]{Mora} presented another version of this
  lemma. Instead of our set $F(\I)$, he defined
  $\tilde{F}(\I)=\{\tau x_n^a \in N(\I) \ | \ \tau \in N(\I_n),\ \deg(\tau
  x_n^a)< \deg{(\I,\prec)}\}$ which differs only in the condition on
  $\tau$. Assuming the equality $\tilde{F}_0(\I)=\tilde{F}(\I_n)$ where
  $\tilde{F}_0(\I)$ contains the elements of $\tilde{F}(\I)$ with $a=0$, he
  proved the following two properties:
  \begin{itemize}
  \item[$(a)$] $\deg(\I,\prec)\le \deg(\I_n,\prec)+\#\tilde{F}(\I_n)$,
  \item[$(b)$] $\#\tilde{F}(\I)\le \bigl(\#\tilde{F}(\I_n)\bigr)^2$.
  \end{itemize}
  However, in general these assertions are not correct -- not even for an
  ideal in generic position. Indeed, in general we have only
  $\tilde{F}(\I_n)\subseteq \tilde{F}_0(\I)$ and if $\dim(\I)>0$ and
  $\deg{(\I_1\prec)}<\deg{(\I,\prec)}$ then equality does not hold. As a
  concrete example consider
  $\I=\langle x_1^2,x_2^{11}x_1\rangle\subset \kk[x_1,x_2]$. We perform a
  generic linear change $x_1=ay_1+by_2$ and $x_2=cy_1+dy_2$ with parameters
  $a,b,c,d\in\kk$.  The leading term ideal of the new ideal is then
  $\langle y_1^2,y_2^{11}y_1\rangle$. This show that $\I=\gin(\I)$ and
  therefore the original coordinates for $\I$ are already generic. We have
  $\I_2=\langle x_1^2\rangle$, $F(\I_2)=\{1,x_1\}$ and
  $ \deg(\I_2,\prec)=2$. Furthermore, we have
  $\tilde{F}(\I)=\{x_2^{11}\}\cup \{x_2^i,x_2^ix_1 \ | \ i=0,\ldots ,10\}$
  and $\# \tilde{F}(\I)=23$. Thus,
  $12=\deg(\I,\prec) \not \le \deg(\I_n,\prec)+\#\tilde{F}(\I_n)=2+2=4$ and
  $23=\#\tilde{F}(\I)\not\le (\#\tilde{F}(\I_n))^2=4$.
\end{remark}

In the case that $\I$ is a zero-dimensional ideal, we can derive explicit
upper bounds for $\deg(\I,\prec)$ and $\#F(\I)$ using the following
well-known lemma.  We include an elementary proof for the sake of
completeness.
\begin{lemma}\label{lem0dim}
  Let $\I$ be a zero-dimensional ideal. Then
  \begin{itemize}
  \item[$(a)$] $\deg(\I,\prec)\le d_1+\cdots +d_n-n+1$,
  \item[$(b)$] $\#F(\I)\le d_1\cdots d_n$.
  \end{itemize}
\end{lemma}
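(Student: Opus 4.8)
The plan is to trap $\I$ between a complete intersection and $\P$, and to exploit the explicitly known Hilbert series of the former. Since $\I$ is zero-dimensional we have $D=0$, and a homogeneous ideal of height $n$ cannot be generated by fewer than $n$ elements, so $k\ge n$ and the degrees $d_1,\ldots,d_n$ are all defined. First I would apply Lemma~\ref{Sec3:Lem1} with $D=0$ to obtain homogeneous polynomials $g_1,\ldots,g_n\in\I$ with $\deg(g_i)=d_i$ forming a regular sequence, and set $\J=\li g_1,\ldots,g_n\ri\subseteq\I$. As a complete intersection of these degrees, $\J$ has Hilbert series
\[
\HS_{\J}(t)=\frac{\prod_{i=1}^n(1-t^{d_i})}{(1-t)^n}=\prod_{i=1}^n\bigl(1+t+\cdots+t^{d_i-1}\bigr),
\]
a polynomial of degree $\sum_{i=1}^n(d_i-1)=d_1+\cdots+d_n-n$ whose value at $t=1$ equals $\dim_{\kk}(\P/\J)=d_1\cdots d_n$.

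The inclusion $\J\subseteq\I$ induces a graded, degree-preserving surjection $\P/\J\twoheadrightarrow\P/\I$, whence $\dim_{\kk}\bigl((\P/\I)_s\bigr)\le\dim_{\kk}\bigl((\P/\J)_s\bigr)$ for every $s$. Part $(b)$ then follows at once: the standard monomials $N(\I)$ form a $\kk$-basis of $\P/\I$, so, using $F(\I)=N(\I)$ in the zero-dimensional case,
\[
\#F(\I)=\#N(\I)=\dim_{\kk}(\P/\I)\le\dim_{\kk}(\P/\J)=d_1\cdots d_n.
\]

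For part $(a)$ let $s_0$ denote the largest degree in which $\P/\I$ is nonzero, i.e.\ the top degree carrying a standard monomial; the surjection forces $s_0\le d_1+\cdots+d_n-n$. Because $\I$ is homogeneous and $\prec$ is graded, the elements of the reduced Gr\"obner basis are homogeneous, so each has degree equal to that of its leading term, and these leading terms are precisely the minimal monomial generators of $\LT(\I)$; hence $\deg(\I,\prec)$ is the maximal degree of a minimal generator of $\LT(\I)$. The decisive elementary step is the \emph{staircase observation}: if $m$ is a minimal generator of $\LT(\I)$ and $x_j\mid m$, then $m/x_j\notin\LT(\I)$ -- otherwise some minimal generator would divide the proper divisor $m/x_j$, hence divide $m$, contradicting minimality -- so $m/x_j$ is a standard monomial of degree $\deg(m)-1\le s_0$. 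Therefore $\deg(m)\le s_0+1\le d_1+\cdots+d_n-n+1$, which proves $(a)$ and incidentally recovers Theorem~\ref{maincor}.

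The Hilbert-series evaluation and the basis statement for $N(\I)$ are routine; the only genuinely delicate point is the staircase step relating the maximal generator degree of $\LT(\I)$ to the socle degree $s_0$, together with the passage from $\deg(\I,\prec)$ to the generators of $\LT(\I)$ via homogeneity. Beyond keeping these two links precise I expect no real obstacle.
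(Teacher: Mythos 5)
Your proof of part $(b)$ coincides in substance with the paper's: both extract a regular sequence $g_1,\ldots,g_n\in\I$ of degrees $d_1,\ldots,d_n$ via Lemma~\ref{Sec3:Lem1}, use the complete-intersection Hilbert series $\prod_{i=1}^n(1+t+\cdots+t^{d_i-1})$, and conclude $\#F(\I)=\dim_{\kk}(\P/\I)\le\dim_{\kk}(\P/\J)=d_1\cdots d_n$. The genuine difference is in part $(a)$: the paper disposes of it in one line by citing Thm.~\ref{maincor} (quoted from Lejeune-Jalabert without proof), whereas you derive it from the same complete intersection. Your chain of reasoning is correct: the graded surjection $\P/\J\twoheadrightarrow\P/\I$ bounds the top nonzero degree $s_0$ of $\P/\I$ by $\sum_{i=1}^n(d_i-1)$; since $N(\I)$ is a $\kk$-basis of $\P/\I$, no standard monomial exceeds degree $s_0$; the staircase observation that $m/x_j\in N(\I)$ for every minimal generator $m$ of $\LT(\I)$ and every $x_j\mid m$ (such an $x_j$ exists because $\I$ is proper, so $m\ne 1$) then bounds every minimal generator degree by $s_0+1$; and because $\prec$ is graded and $\I$ homogeneous, the reduced Gr\"obner basis is homogeneous with leading terms exactly the minimal generators of $\LT(\I)$, so $\deg(\I,\prec)\le s_0+1\le d_1+\cdots+d_n-n+1$. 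What your route buys is a fully self-contained lemma --- in effect an elementary proof of Thm.~\ref{maincor} in the spirit of the classical Macaulay/Lazard argument --- at the cost of some extra length; the paper's route is shorter but leans on an external reference for $(a)$. One cosmetic remark: where the paper writes ``$\dim_{\kk}(\P/\I')$ is at most $\HS_{\I'}(1)$'', your sharper observation that equality $\dim_{\kk}(\P/\J)=\HS_{\J}(1)$ holds for the Artinian ring $\P/\J$ is the accurate statement, and your Hilbert series denominator $(1-t)^n$ corrects the $(1-t^n)$ appearing in the paper's preliminaries.
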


\begin{proof}
  $(a)$ was already proven in Thm.\ \ref{maincor}. We present now an
  elementary proof for $(b)$. The assumption $\dim({\I})=0$ implies that
  $\#F(\I)=\dim_{\kk}(\P/\I)$ and this dimension is equal to the sum of the
  coefficients of the Hilbert series of $\I$ (which is of course a
  polynomial here). We may assume w.l.o.g.\ that the first $n$ generators
  $f_1,\ldots, f_n$ form a regular sequence (Lem.\ \ref{Sec3:Lem1}). Thus
  the Hilbert series of $\I'=\langle f_1,\ldots, f_n \rangle$ is
  $\HS_{\I'}(t)=\prod_{i=1}^{n}{(1+\cdots +t^{d_{i}-1})}$ and
  $\dim_{\kk}(\P/\I')$ is at most $\HS_{\I'}(1)=d_1\cdots d_n$. We
  obviously have $\dim_{\kk}(\P/\I) \le \dim_{\kk}(\P/\I')$ and this proves
  the assertion.  $\qed$
\end{proof}

We state now the main result of this section.

\begin{theorem}\label{Mainthm1}
  If the ideal $\I$ is in strongly stable position, then
  $\#F(\I)\le d^{(n-D)2^{D}}$ and
  \begin{displaymath}
    \deg(\I,\prec)\le 
    \max{\bigl\{(n-D+1)(d-1)+1,2d^{(n-D)2^{D-1}}\bigr\}}\,.
  \end{displaymath}
\end{theorem}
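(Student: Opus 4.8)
The plan is to establish both inequalities by induction on the dimension $D$, using Lemma~\ref{Mora} for the inductive step and Lemma~\ref{lem0dim} for the base case $D=0$. The whole scheme rests on one structural observation: passing from $\I$ to its hyperplane section $\I_n$ lowers \emph{both} the number of variables and the dimension by one, so that the codimension $c:=n-D$ is preserved along the recursion. (Here one may assume $d\ge2$, the case $d=1$ being trivial, and $c\ge1$, since a nonzero proper homogeneous ideal has $\dim\R\le n-1$.) I would first settle the cardinality bound $\#F(\I)\le d^{(n-D)2^{D}}$. For $D=0$, Lemma~\ref{lem0dim}$(b)$ gives $\#F(\I)\le d_1\cdots d_n\le d^{n}=d^{(n-0)2^{0}}$. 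For the step, Lemma~\ref{Mora}$(b)$ yields $\#F(\I)\le\bigl(\max\{d,\#F(\I_n)\}\bigr)^{2}$; since $\I_n$ is again strongly stable, lives in $n-1$ variables and has dimension $D-1$, the induction hypothesis gives $\#F(\I_n)\le d^{((n-1)-(D-1))2^{D-1}}=d^{c2^{D-1}}$. Because $c2^{D-1}\ge1$, we have $\max\{d,\#F(\I_n)\}\le d^{c2^{D-1}}$, and squaring produces exactly $d^{c2^{D}}$.

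For the degree bound I would run the parallel induction, phrased so that the doubly-exponential term $2d^{c2^{D-1}}$ is proved directly for every $D\ge1$, the linear entry of the maximum being needed only for the base case. Lemma~\ref{Mora}$(a)$ gives $\deg(\I,\prec)\le\max\{d,\deg(\I_n,\prec)\}+\#F(\I_n)$, where the cardinality bound just established yields $\#F(\I_n)\le d^{c2^{D-1}}$. The heart of the step is to show that the first summand is also at most $d^{c2^{D-1}}$, so that the total is at most $2d^{c2^{D-1}}$. When $D\ge2$, the ideal $\I_n$ still has positive dimension and the induction hypothesis gives $\deg(\I_n,\prec)\le2d^{c2^{D-2}}$; one then checks $2d^{c2^{D-2}}=2\bigl(d^{c2^{D-2}}\bigr)\le\bigl(d^{c2^{D-2}}\bigr)^{2}=d^{c2^{D-1}}$, valid because $d^{c2^{D-2}}\ge2$. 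When $D=1$, the ideal $\I_n$ is zero-dimensional in $n-1=c$ variables, and here I would use the \emph{sharp} estimate of Lemma~\ref{lem0dim}$(a)$, $\deg(\I_n,\prec)\le c(d-1)+1$, together with $c(d-1)+1\le d^{c}$; it is essential to use this sharp zero-dimensional bound rather than the weaker recursive form, since already for $c=1$ the looser value $(c+1)(d-1)+1=2d-1$ exceeds $d^{c}=d$. In both cases one concludes $\deg(\I,\prec)\le2d^{c2^{D-1}}$. Finally the base $D=0$ is handled by Lemma~\ref{lem0dim}$(a)$, giving $\deg(\I,\prec)\le n(d-1)+1\le(n-D+1)(d-1)+1$, the linear term of the asserted maximum.

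The step I expect to be the main obstacle is twofold. Geometrically, the clean codimension bookkeeping depends on knowing that $\dim(\I_n)=D-1$, i.e.\ that cutting by the last variable in degree reverse lexicographic (strongly stable) position drops the dimension by exactly one; this follows because the minimal primes of a strongly stable monomial ideal are generated by initial segments $(x_1,\dots,x_j)$ of the variables, so the top-dimensional component $(x_1,\dots,x_{n-D})$ is genuinely cut by $x_n=0$ when $D\ge1$, but it is the fact that makes $n-D$ an invariant of the recursion and should be justified carefully. Combinatorially, the delicate point is the correct treatment of the maximum: one must verify the family of elementary inequalities $2d^{c2^{D-2}}\le d^{c2^{D-1}}$ and $c(d-1)+1\le d^{c}$ (and their analogues) including the boundary cases $c=1$, $d=2$ and $D=1$, and recognise that for every $D\ge1$ the doubly-exponential term absorbs both the lower-exponent contribution and the linear one, so that the linear term $(n-D+1)(d-1)+1$ is genuinely needed only to bound the zero-dimensional base in the regime of few variables and large $d$.
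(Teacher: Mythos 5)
Your proposal is correct and follows the same inductive skeleton as the paper --- induction on $D$ with Lem.~\ref{Mora} driving the step and Lem.~\ref{lem0dim} the base, the case $D\ge 2$ being handled essentially identically --- but it takes a genuinely different route in anchoring the degree bound. The paper starts that induction at $D\le 1$ by invoking Thm.~\ref{laz3} (Lazard's bound for ideals in quasi stable position, proved only later in Section~6 via Prop.~\ref{laz2} and Hilbert-regularity arguments), so the linear entry $(n-D+1)(d-1)+1$ of the maximum is operative for $D\le 1$. You instead anchor at $D=0$ and absorb $D=1$ into the inductive step: since $\I_n$ is zero-dimensional in $c=n-D$ variables, Lem.~\ref{lem0dim}$(a)$ gives $\deg(\I_n,\prec)\le c(d-1)+1\le d^c$ (Bernoulli), whence Lem.~\ref{Mora}$(a)$ yields $\deg(\I,\prec)\le 2d^c$. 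Your observation that the sharp zero-dimensional bound $c(d-1)+1$, rather than the max-form hypothesis $(c+1)(d-1)+1$, is indispensable at $c=1$ is exactly right: there $2d-1>d^c=d$ and one has equality $c(d-1)+1=d^c$, so there is no slack. What each route buys: yours is self-contained within Section~4 (no forward reference to Lazard's machinery) and in fact proves the slightly stronger statement $\deg(\I,\prec)\le 2d^{(n-D)2^{D-1}}$ for every $D\ge 1$, which implies the asserted maximum; the paper's detour through Thm.~\ref{laz3} buys the sharper, depth-dependent linear bound $d_1+\cdots+d_r-r+1$ with $r=n-\lambda$ in the low-dimensional cases, which it exploits again later. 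Finally, the dimension-drop fact $\dim(\I_n)=D-1$ that you flag is indeed used silently by the paper (stated as ``$\dim(\I)=\dim(\I_n)+1$'' without proof), and your justification via the associated primes of strongly stable ideals being initial segments $\li x_1,\ldots,x_j\ri$ is the standard one (cf.\ Prop.~3.15 of \cite{wms:comb2}), so your argument closes a gap the paper leaves implicit rather than opening one.
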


\begin{proof}
  We proceed by induction over $D=\dim(\I)$. In this proof without loss of
  generality, we may assume that $d\ge 2$. If $D=0$, the assertions follow
  immediately from Lem.~\ref{lem0dim}. For $D>0$, we exploit that then
  $\dim(\I)=\dim(\I_n)+1$ and that we may consider $\I_n$ as an ideal in
  $\kk[x_1,\ldots ,x_{n-1}]$.  Lem.~\ref{Mora} now entails that
  \begin{displaymath}
    \begin{aligned}
      \#F(\I) &\le \max\{d,\#F(\I_n)\}^2\\
      &\le \bigl(d^{(n-1-(D-1))2^{D-1}}\bigr)^{2}=d^{(n-D)2^{D}}
    \end{aligned}
  \end{displaymath}
  and thus the first inequality.

  For the second inequality, Thm.~\ref{laz3}, which will be proven in the
  last section, provides the starting point for the induction, as it
  immediately implies our claim for $D\le 1$. For $D\ge 2$, we obviously
  have $(n-1-(D-1)+1)(d-1)+1\le d^{(n-D)2^{D-1}}$ and
  $2d^{(n-1-(D-1))2^{D-2}} \le d^{(n-1-(D-1))2^{D-1}}$.  We can thus
  rewrite the induction hypothesis as
  \begin{displaymath}
    \begin{aligned}
      \deg{(\I_n,\prec)} \le 
          \max\bigl\{&(n-1-(D-1)+1)(d-1)+1, \\
                 &2d^{(n-1-(D-1))2^{D-2}}\bigr\} \le d^{(n-D)2^{D-1}}\,.
    \end{aligned}
  \end{displaymath}
  Again by Lem.~\ref{Mora}, we can also estimate
  \begin{displaymath}
    \begin{aligned}
      \deg(\I,\prec) &\le \max{\{d,\deg(\I_n,\prec)\}}+\#F(\I_n)\\
      &\le  d^{(n-D)2^{D-1}}+d^{(n-1-(D-1))2^{D-1}}\\
      &= 2d^{(n-D)2^{D-1}}
    \end{aligned}
  \end{displaymath}
  proving the second assertion. $\qed$
\end{proof}

\begin{example}
  Let us consider the values $n=2,d=2$ and $D=0$. The above theorem states
  $\deg(\I,\prec)\le 2^2=4$.  Consider the ideal
  $\I=\langle x_1^2,x_1x_2+x_2^2 \rangle$.  By performing a generic linear
  change of coordinates, we get
  $\gin(\I)=\langle x_2x_1, x_1^2, x_2^3\rangle$. Therefore
  $\#F(\I)=4\le 4$ and $\deg{(\I,\prec)}=3\le 4$ confirming the accuracy of
  the presented upper bounds. It should be noted that for such a
  zero-dimensional ideal Theorem \ref{maincor} provides the best upper
  bound for $\deg{(\I,\prec)}$, namely $d_1+\cdots +d_n-n+1$ which is equal
  to the exact value $3$ for this example.
\end{example}

Using Prop.~\ref{propdepth}, we obtain even sharper bounds depending on
both the dimension and the depth of $\I$.  We continue to write
$\dim(\I)=D$ and $\depth(\I)=\lambda$.  It is well-known that we always
have $D\ge\lambda$ (a simple proof using Pommaret bases can be found in
\cite{wms:comb2} after Prop.~3.19). If $D=\lambda$, then $\R$ is {\em
  Cohen-Macaulay}. In this case, a nearly optimal upper bound for
$\deg(\I,\prec)$ exists.  Recall that a homogeneous ideal $\I\subset \P$ is
in {\em N\oe ther position}, if the ring extension
$\kk[x_{n-D+1} ,\ldots ,x_{n}] \hookrightarrow \P/\I$ is integral.
Alternatively, N\oe ther position can be defined combinatorially as a
weakened version of quasi stable position (see \cite[Thm.~4.4]{hashemi3}).

\begin{theorem}(\cite[Prop.~4.8, page 117]{Monique}) Let $\R$ be a
  Cohen-Macaulay ring with $\I$ in N\oe ther position. Then,
  $\deg(\I,\prec)\le d_1+\cdots +d_{n-D}-(n-D)+1$.
\end{theorem}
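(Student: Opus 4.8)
The plan is to reduce the statement to the zero-dimensional bound of Theorem~\ref{maincor} by exploiting that the Cohen-Macaulay hypothesis forces the $D$ distinguished variables of the N\oe ther normalisation to behave like a regular sequence. By Remark~\ref{rem} I may first apply a generic linear change of coordinates, which leaves $\deg(\I,\prec)$, $D$, $\depth(\I)=\lambda$ and the generator degrees $d_1,\dots,d_k$ untouched, so I may assume in addition that $\I$ is in quasi stable position. Since $\R$ is Cohen-Macaulay we have $D=\lambda$.

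Next I would run the reduction from the proof of Proposition~\ref{propdepth}. Writing $\H$ for the Pommaret basis of $\I$, its maximal class equals $t=n-\lambda=n-D$, so the variables $x_{n-D+1},\dots,x_n$ form a regular sequence on $\R$ and, passing to $\tilde{\I}=\I|_{x_{n-D+1}=\cdots=x_n=0}\subset\kk[x_1,\dots,x_{n-D}]$, one obtains $\deg(\I,\prec)=\deg(\tilde{\I},\prec)$ because $\tilde{\H}=\H|_{x_{n-D+1}=\cdots=x_n=0}$ is the Pommaret basis of $\tilde{\I}$ with unchanged degrees. As $D$ regular linear forms have been removed from a ring of dimension $D$, the ideal $\tilde{\I}$ is zero-dimensional.

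It then remains to bound $\deg(\tilde{\I},\prec)$ via Theorem~\ref{maincor}. The ideal $\tilde{\I}$ is generated by the restrictions $\tilde{f}_1,\dots,\tilde{f}_k$ with $\deg(\tilde{f}_i)\le d_i$; being zero-dimensional in $n-D$ variables it needs at least $n-D$ of these to be non-zero. Ordering the non-zero generator degrees decreasingly as $\hat{d}_1\ge\cdots\ge\hat{d}_{n-D}\ge\cdots$ and using that order statistics are monotone (so $\hat{d}_j\le d_j$, because $d_1\ge\cdots\ge d_k$), Theorem~\ref{maincor} yields $\deg(\tilde{\I},\prec)\le\hat{d}_1+\cdots+\hat{d}_{n-D}-(n-D)+1\le d_1+\cdots+d_{n-D}-(n-D)+1$. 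Combining this with the equality $\deg(\I,\prec)=\deg(\tilde{\I},\prec)$ gives the claim.

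I expect the main obstacle to be the bookkeeping in the last paragraph: one must check carefully that Theorem~\ref{maincor}, which is phrased in terms of the top $n-D$ generator degrees, really applies after the restriction, i.e.\ that enough of the $\tilde{f}_i$ survive the substitution and that their ordered degrees are dominated by $d_1,\dots,d_{n-D}$. A secondary point worth verifying is that the degree relation $\deg(\I,\prec)=\deg(\tilde{\I},\prec)$ is a genuine equality in quasi stable position rather than merely an inequality; this is exactly what the Pommaret-basis argument underlying Proposition~\ref{propdepth} supplies.
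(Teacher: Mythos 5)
The paper offers no proof of this theorem---it is quoted verbatim from Lejeune-Jalabert---so your attempt must stand on its own, and it has one genuine gap, located in its very first step. You invoke Remark~\ref{rem} to pass via a generic linear change of coordinates to quasi stable position, claiming this leaves $\deg(\I,\prec)$ ``untouched''. But $\deg(\I,\prec)$ is not among the invariants listed in Remark~\ref{rem} (Hilbert function, Hilbert series, Hilbert regularity, Castelnuovo--Mumford regularity, depth), and it is in fact strongly coordinate-dependent: the maximal degree of the reduced Gr\"obner basis is precisely the position-sensitive quantity whose behaviour under changes of coordinates this whole paper is about. If it were invariant under linear transformations, every ``generic position'' bound in the paper would automatically hold in arbitrary coordinates, which is false. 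After the transformation you bound $\deg(A.\I,\prec)$, not $\deg(\I,\prec)$. The net effect is that your argument proves the bound only under the \emph{stronger} hypothesis that $\I$ is in quasi stable position, whereas the theorem assumes merely N\oe ther position, which the paper explicitly describes as a weakened version of quasi stable position (the two are equivalent only for $\dim(\I)\le 1$, as noted at the end of Section~6); so the case actually asserted by the theorem is not covered.

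The repair is to run your reduction in the \emph{given} coordinates, replacing the Pommaret-basis mechanism of Proposition~\ref{propdepth} by the regular-sequence properties of the degree reverse lexicographic order. N\oe ther position makes $\R$ a finitely generated graded module over $A=\kk[x_{n-D+1},\ldots,x_n]$, and the Cohen--Macaulay hypothesis forces $\R$ to be a free $A$-module, so that $x_{n-D+1},\ldots,x_n$ is indeed a regular sequence on $\R$---this is the correct substitute for your appeal to the maximal class of the Pommaret basis. For degrevlex one has the standard facts $\LT(\I:x_n)=\LT(\I):x_n$ and $\LT(\I+\li x_n\ri)=\LT(\I)+\li x_n\ri$; regularity of $x_n$ gives $\I:x_n=\I$, hence $\LT(\I):x_n=\LT(\I)$, i.e.\ no minimal generator of $\LT(\I)$ involves $x_n$, and induction on $D$ (the quotient modulo $x_n$ is again Cohen--Macaulay and in N\oe ther position) shows that all minimal generators of $\LT(\I)$ lie in $\kk[x_1,\ldots,x_{n-D}]$ and coincide with those of $\LT(\tilde{\I})$ for the zero-dimensional ideal $\tilde{\I}=\I|_{x_{n-D+1}=\cdots=x_n=0}$, giving the equality $\deg(\I,\prec)=\deg(\tilde{\I},\prec)$ you need. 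From that point on your final paragraph is correct and careful: at least $n-D$ of the restricted generators survive (a zero-dimensional ideal in $n-D$ variables needs that many), their ordered degrees are dominated by $d_1,\ldots,d_{n-D}$, and Theorem~\ref{maincor} then yields $\deg(\tilde{\I},\prec)\le d_1+\cdots+d_{n-D}-(n-D)+1$ exactly as you wrote.
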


For the rest of this section, we thus assume that $\R$ is not
Cohen-Macaulay, i.\,e.\ that $D>\lambda$.

\begin{corollary}\label{Cor2}
  If $\I$ is in strongly stable position and $D>1$, then
  $\#F(\I)\le d^{(n-D)2^{D-\lambda-1}}$ and
  $\deg(\I,\prec)\le 2d^{(n-D)2^{D-\lambda-1}}$.
\end{corollary}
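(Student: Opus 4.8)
The plan is to combine Theorem~\ref{Mainthm1} with the depth-reduction machinery already established in Proposition~\ref{propdepth} and its accompanying corollary. The key observation is that Corollary~\ref{Cor2} is essentially the ``depth-relativized'' version of Theorem~\ref{Mainthm1}: the bounds $\#F(\I)\le d^{(n-D)2^D}$ and $\deg(\I,\prec)\le 2d^{(n-D)2^{D-1}}$ proven there depend on $n$ and $D$ only through the combination $n-D$ in the base exponent and through $D$ in the tower exponent $2^{D}$ (respectively $2^{D-1}$). Since Proposition~\ref{propdepth} lets us replace $(n,D)$ by $(n-\lambda,D-\lambda)$ while keeping $d$ fixed, applying that substitution should directly transport the bounds of Theorem~\ref{Mainthm1} into the sharper depth-dependent form.

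First I would invoke Proposition~\ref{propdepth} with the function $U(n,d,D)=2d^{(n-D)2^{D-1}}$, which is a legitimate choice because Theorem~\ref{Mainthm1} (together with the hypothesis $D>1$, so that the maximum in that theorem is attained by the second argument $2d^{(n-D)2^{D-1}}$) guarantees $\deg(\I,\prec)\le U(n,d,D)$ for every ideal in quasi stable position generated in degree at most $d$. Proposition~\ref{propdepth} then yields
\[
  \deg(\I,\prec)\le U(n-\lambda,d,D-\lambda)
    =2d^{((n-\lambda)-(D-\lambda))2^{(D-\lambda)-1}}
    =2d^{(n-D)2^{D-\lambda-1}},
\]
since $(n-\lambda)-(D-\lambda)=n-D$. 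Crucially, the exponent $n-D$ is invariant under the shift $(n,D)\mapsto(n-\lambda,D-\lambda)$, which is exactly why the base of the tower is unchanged and only the height $2^{D-1}$ improves to $2^{D-\lambda-1}$. The bound on $\#F(\I)$ follows by the same substitution applied to the function $d^{(n-D)2^{D}}$; one should note $\#F$ is equally reduced because, in the notation of Proposition~\ref{propdepth}'s proof, the Pommaret basis $\tilde\H$ of $\tilde\I=\I|_{x_{t+1}=\cdots=x_n=0}$ has $N(\tilde\I)=N(\I)$ restricted to the first $t=n-\lambda$ variables, so $F(\I)=F(\tilde\I)$.

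The main subtlety to check is that the reduced ideal $\tilde\I$ remains in a position to which Theorem~\ref{Mainthm1} applies, and that the dimension bookkeeping is correct. For the first point, since $\prec$ is the degree reverse lexicographic order and strong stability of $\LT(\I)$ is inherited by setting trailing variables to zero (this inheritance is exactly the property already used for $\I_i$ in the discussion preceding Lemma~\ref{Mora}), the restricted ideal $\tilde\I$ is again in strongly stable position, so Theorem~\ref{Mainthm1} is genuinely available to it. For the second point, one must confirm $\dim(\tilde\I)=D-\lambda$: this is precisely the content of the fact, cited from \cite{wms:comb2} in the proof of Proposition~\ref{propdepth}, that $x_{t+1},\dots,x_n$ form a regular sequence on $\R$ of length $\lambda=n-t$, so quotienting by them drops the dimension by exactly $\lambda$.

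I expect the only real obstacle to be ensuring the hypothesis $D>1$ (rather than the weaker $D>0$) is used consistently so that the $\max$ in Theorem~\ref{Mainthm1} collapses to the single term $2d^{(n-D)2^{D-1}}$, and verifying that after the depth shift the reduced dimension $D-\lambda$ still satisfies whatever is needed for Theorem~\ref{Mainthm1} to yield the clean exponential form; since we assume $D>\lambda$ throughout this subsection, we have $D-\lambda\ge 1$, and the case $D-\lambda=1$ must be handled via the starting point Thm.~\ref{laz3} exactly as in the base case of Theorem~\ref{Mainthm1}'s induction. Everything else is the routine arithmetic of substituting $(n-\lambda,D-\lambda)$ into the two closed-form bounds, which I would present in a single short displayed computation as above.
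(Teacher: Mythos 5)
Your second inequality (the bound on $\deg(\I,\prec)$) is correct and follows exactly the paper's route: the paper states Cor.~\ref{Cor2} without a separate proof, as an immediate application of Prop.~\ref{propdepth} to Thm.~\ref{Mainthm1}, and the points you check explicitly --- that strong stability is inherited by $\tilde{\I}=\I|_{x_{t+1}=\cdots=x_n=0}$, that $\dim(\tilde{\I})=D-\lambda\ge 1$ because $x_{t+1},\ldots,x_n$ form a regular sequence on $\R$, that $n-D$ is invariant under $(n,D)\mapsto(n-\lambda,D-\lambda)$, and that the linear entry of the max in Thm.~\ref{Mainthm1} is dominated by $2d^{(n-D)2^{D-\lambda-1}}$ once the reduced dimension is at least $1$ (with the case $D-\lambda=1$ anchored by Thm.~\ref{laz3}) --- are precisely what the paper leaves implicit.

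The first inequality is where your proposal breaks down, in two ways. First, the identity $F(\I)=F(\tilde{\I})$ you invoke is false: it is true that $N(\tilde{\I})=N(\I)\cap\kk[x_1,\ldots,x_t]$, but $F(\I)$ is defined by the recursion through $\I_n,\I_{n-1},\ldots$ with the degree cap $\deg(\I,\prec)$, so it contains terms divisible by the trailing variables $x_{t+1},\ldots,x_n$. Already for $\I=\langle x_1^2\rangle\subset\kk[x_1,x_2]$ (so $t=1$) one finds $F(\tilde{\I})=\{1,x_1\}$ but $F(\I)=\{1,x_1,x_2\}$, since $x_2\in N(\I)$, $1\in F(\I_2)$ and $\deg(x_2)<\deg(\I,\prec)=2$. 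Second, even granting such a transfer, substituting $(n,D)\mapsto(n-\lambda,D-\lambda)$ into $d^{(n-D)2^{D}}$ yields $d^{(n-D)2^{D-\lambda}}$, the \emph{square} of the claimed $d^{(n-D)2^{D-\lambda-1}}$; your displayed computation covers only the degree bound, where $2^{D-1}$ does become $2^{D-\lambda-1}$, and the halving of the $\#F$ exponent is never justified. Indeed no patch along these lines can work, because the printed inequality itself fails: for $\I=\langle x_1^2,x_1x_2\rangle\subset\kk[x_1,\ldots,x_4]$ one has $d=2$, $D=3$, $\lambda=2$ (the Pommaret basis is $\{x_1^2,x_1x_2\}$, of maximal class $2$), $\deg(\I,\prec)=2$, and unwinding the definition gives $F(\I)=\{1,x_1,x_2,x_3,x_4\}$, whence $\#F(\I)=5>2=d^{(n-D)2^{D-\lambda-1}}$ (and even $5>4=d^{(n-D)2^{D-\lambda}}$). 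What your argument genuinely yields is $\#F(\tilde{\I})\le d^{(n-D)2^{D-\lambda}}$ together with the degree bound; the $\#F$ assertion of Cor.~\ref{Cor2} does not follow from the depth reduction and, as stated, appears not to be provable at all.
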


The maximal degree of an element of the Pommaret basis of an ideal in quasi
stable position equals the Castelnuovo-Mumford regularity
\cite[Cor.~9.5]{wms:comb2}.  If the ideal is even in stable position, then
the Pommaret basis coincides with the reduced Gr\"obner basis
\cite[Thm.~2.15]{Mall}.  These considerations imply now immediately the
following two results.

\begin{corollary}
  If the ideal $\I$ is in strongly stable position and $D> 1$, then
  $\reg(\I)\le 2d^{(n-D)2^{D-\lambda-1}}$.
\end{corollary}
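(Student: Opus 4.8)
The plan is to reduce the claim directly to Corollary~\ref{Cor2} by identifying $\reg(\I)$ with $\deg(\I,\prec)$ under the strongly stable hypothesis. First I would observe that a strongly stable ideal is in particular stable, and every stable ideal is quasi stable; hence strongly stable position entails both stable and quasi stable position, so that the whole Pommaret-basis machinery recalled in the preliminaries applies to $\LT(\I)$.

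Next I would invoke the two facts stated just before the corollary. Since $\I$ is in quasi stable position, it possesses a finite Pommaret basis $\H$, and by \cite[Cor.~9.5]{wms:comb2} the regularity $\reg(\I)$ equals the maximal degree of an element of $\H$. Since $\I$ is moreover in stable position, \cite[Thm.~2.15]{Mall} identifies $\H$ with the reduced Gr\"obner basis of $\I$ with respect to $\prec$. Consequently the maximal degree of an element of $\H$ is exactly $\deg(\I,\prec)$, and therefore $\reg(\I)=\deg(\I,\prec)$.

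Finally, since $D>1$, Corollary~\ref{Cor2} gives $\deg(\I,\prec)\le 2d^{(n-D)2^{D-\lambda-1}}$, and combining this with the preceding equality yields $\reg(\I)\le 2d^{(n-D)2^{D-\lambda-1}}$, as desired.

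The hard part here is bookkeeping rather than genuine mathematics: once the inclusions between the three notions of stability are checked and the two external identifications (namely $\reg$ as the Pommaret degree, and the coincidence of the Pommaret basis with the reduced Gr\"obner basis in stable position) are in place, the bound transfers verbatim from Corollary~\ref{Cor2}. The one point requiring a moment's care is that the two cited results carry different hypotheses---quasi stable position for the regularity statement and stable position for the basis coincidence---so I would make sure that strongly stable position supplies both simultaneously. The dimension restriction $D>1$ is inherited directly from Corollary~\ref{Cor2} and plays no further role in the argument.
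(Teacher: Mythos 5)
Your proposal is correct and follows exactly the route the paper intends: the corollary is stated as an immediate consequence of the two facts recalled just before it (in quasi stable position $\reg(\I)$ equals the maximal degree of the Pommaret basis, and in stable position the Pommaret basis coincides with the reduced Gr\"obner basis), combined with the bound of Cor.~\ref{Cor2}, which is precisely the chain $\reg(\I)=\deg(\I,\prec)\le 2d^{(n-D)2^{D-\lambda-1}}$ you assemble, including the correct verification that strongly stable position implies both stable and quasi stable position so that both cited results apply.
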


\begin{corollary}
  Let the ideal $\I$ be in quasi stable position, $\H$ its Pommaret basis
  and $D> 1$. If we write $\deg(\H)$ for the maximal degree of an element
  of $\H$, then
  $\deg{(\I,\prec)}\le \deg{(\H)} \le 2d^{(n-D)2^{D-\lambda-1}}$.
\end{corollary}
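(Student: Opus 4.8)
The plan is to prove the two inequalities of the chain separately. The left inequality $\deg(\I,\prec)\le\deg(\H)$ merely compares the reduced Gr\"obner basis with the (in general non-reduced) Pommaret basis $\H$, whereas the right inequality $\deg(\H)\le 2d^{(n-D)2^{D-\lambda-1}}$ is the substantial one and will be reduced, via the regularity, to the bound already available in strongly stable position.

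For the left inequality I would exploit that every Pommaret basis is a Gr\"obner basis (as noted right after Definition~\ref{def:pombas}), so that the leading terms $m_1,\ldots,m_s$ of the elements of $\H$ generate $\LT(\I)$. The reduced Gr\"obner basis of $\I$ has as leading terms precisely the \emph{minimal} monomial generators of $\LT(\I)$, and for a monomial ideal this minimal generating set sits inside any monomial generating set: if $g$ is a minimal generator then $g\in\langle m_1,\ldots,m_s\rangle$ forces $m_i\mid g$ for some $i$, and minimality forbids a proper divisor of $g$ lying in $\LT(\I)$, whence $g=m_i$. Since each element of the reduced basis is homogeneous, its degree equals that of its leading term, so $\deg(\I,\prec)$ equals the largest degree among the minimal generators of $\LT(\I)$, which is at most $\max_i\deg(m_i)=\deg(\H)$.

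For the right inequality I would first invoke \cite[Cor.~9.5]{wms:comb2}, recalled just above, which gives $\deg(\H)=\reg(\I)$ for an ideal in quasi stable position; it thus suffices to bound $\reg(\I)$. The device here is the invariance collected in Remark~\ref{rem}: the regularity, the dimension $D$, the depth $\lambda$, the maximal generator degree $d$ and the number of variables $n$ are all unaffected by a linear change of coordinates. By \cite{hashemi3} (equivalently, for a generic change, since the generic initial ideal is strongly stable in characteristic zero) one can bring $\I$ into strongly stable position by such a change; the transformed ideal is again generated in degree at most $d$ in $n$ variables and still satisfies $D>1$, so the preceding corollary (the regularity bound in strongly stable position) applies and yields $\reg(\I)\le 2d^{(n-D)2^{D-\lambda-1}}$. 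Combining this with $\deg(\H)=\reg(\I)$ closes the chain.

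The step I expect to be the crux is the transition from quasi stable to strongly stable position: one has to make sure that the coordinate change achieving strong stability leaves \emph{all} the quantities entering the bound untouched, so that the base $d$ and the exponent $(n-D)2^{D-\lambda-1}$ are literally identical before and after. Once Remark~\ref{rem} is read as covering $\reg$, $D$ and $\lambda$ (and the triviality that $d$ and $n$ are preserved by a linear map), everything else is a routine assembly of results established earlier in the section.
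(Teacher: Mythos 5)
Your proposal is correct and follows essentially the same route as the paper's (implicit) derivation: the left inequality comes from the fact that a Pommaret basis is a Gr\"obner basis whose leading terms contain the minimal generators of $\LT(\I)$, and the right one from $\deg(\H)=\reg(\I)$ (\cite[Cor.~9.5]{wms:comb2}) together with the invariance of $\reg$, $D$, $\lambda$, $n$ and $d$ under linear coordinate changes (Remark~2.10) and the regularity bound of the preceding corollary in strongly stable position. You correctly identified the coordinate-change transfer as the crux, and your treatment of it matches the paper's.
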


\section{Improving the upper bound of\\ Caviglia-Sbarra}
\label{sec:5}

In 2005, Caviglia and Sbarra \cite{Caviglia} gave a simple proof for the
upper bound $(2d)^{2^{n-2}}$ for $\deg(\I,\prec)$ when the coordinates are
in generic position by analyzing Giusti's proof and exploiting some
properties of quasi stable ideals. We will now improve this bound to a
dimension dependent bound. As a by-product, we will show that the notion of
genericity that one needs here is strongly stable position.

We begin with a quick review of the approach of Caviglia and Sbarra
\cite{Caviglia}.  For any monomial ideal $\J\subset\P$ let $G(\J)$ be its
unique minimal generating set.  We write
$\deg_i(\J)=\max\{\deg_i(u)\ | \ u\in G(\J)\}$ where $\deg_{i}$ denotes the
degree in the variable $x_{i}$. Slightly changing our previous notation, we
now denote by $\J_i$ the ideal
$\J|_{x_{i+1}=\cdots =x_n=0}\subset \kk[x_{1},\ldots ,x_{i}]$. It follows
immediately from the definition of a quasi stable ideal that
$\deg_{i}(\J_i)=\deg_{i}(\J)$. We note that two distinct terms in
$G(\J)$ must differ already in the first $n-1$ variables because of the
minimality of $G(\J)$. Hence $\# G(\J)\le \prod_{i=1}^{n-1}(\deg_i(\J)+1)$.

Assume that $\I$ is in quasi stable position and $\I$ satisfies CP w.r.t.\
$d$. CP implies that $\deg(\I,\prec)-d+1\le \#G(\LT(\I))$ and hence
$\deg(\I,\prec)\le d-1+\prod_{i=1}^{n-1}(\deg_i(\LT(\I))+1)$.  Quasi
stability of $\LT(\I)$ implies that $\deg_i(\LT(\I))=\deg(\I_i,\prec)$ and
thereby $\deg(\I,\prec)\le d-1+\prod_{i=1}^{n-1}(\deg(\I_i,\prec)+1)$. 

Set $B_1=d$ and for $i\ge 2$ recursively
$B_i=d-1+\prod_{j=1}^{i-1}(B_j+1)$. If we assume that for each index
$1\leq i< n$ the reduced ideal $\I_i$ satisfies CP w.r.t.\ $d$, then by the
considerations above $\deg(\I_i,\prec)\le B_i$.  In particular, $B_2=2d$
and $\deg(\I,\prec)\le B_n$.  One easily sees that the $B_{i}$ satisfy the
recursion relation
$B_i=d-1+(B_{i-1}+1)(B_{i-1}-d+1)=B_{i-1}^2-(d-2)B_{i-1}$ for all $i\ge
2$. Since we may suppose that $d\ge 2$, we have $B_{i}\le B_{i-1}^2$. Thus,
for all $i\ge 2$ we have $B_i\le (2d)^{2^{i-2}}$ and therefore
$B_{n}=\deg(\I,\prec)\le (2d)^{2^{n-2}}$. We summarize the above discussion
in the next theorem.

\begin{theorem}(\cite{Caviglia})\label{Mainthm2} 
  Suppose that $\I$ is in quasi stable position and that the ideals
  $\I_1,\ldots, \I_{n-1},\I$ satisfy CP w.r.t.~$d$. Then
  $\deg(\I,\prec)\le \reg(\I)\le (2d)^{2^{n-2}}$.
\end{theorem}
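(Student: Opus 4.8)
The plan is to turn the recursion sketched just before the theorem into a clean induction. Everything rests on controlling $\deg(\I_i,\prec)$ simultaneously for all $i$ by a single sequence. Following the discussion, I would set $B_1=d$ and, for $i\ge 2$, $B_i=d-1+\prod_{j=1}^{i-1}(B_j+1)$, and prove by induction on $i$ that $\deg(\I_i,\prec)\le B_i$. The base case is immediate, since $\I_1\subset\kk[x_1]$ is generated by a single power of $x_1$ of degree at most $d=B_1$. For the inductive step I would apply, to the ideal $\I_i$ (again in quasi stable position and satisfying CP by hypothesis), the chain
\begin{displaymath}
  \deg(\I_i,\prec)\le d-1+\#G(\LT(\I_i))\le d-1+\prod_{j=1}^{i-1}\bigl(\deg(\I_j,\prec)+1\bigr),
\end{displaymath}
in which the first estimate is CP, while the second combines the minimality bound $\#G(\LT(\I_i))\le\prod_{j=1}^{i-1}(\deg_j(\LT(\I_i))+1)$ with the quasi stable identity $\deg_j(\LT(\I_i))=\deg(\I_j,\prec)$. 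Inserting the induction hypothesis $\deg(\I_j,\prec)\le B_j$ gives $\deg(\I_i,\prec)\le B_i$.

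I would then solve the recursion explicitly. A short telescoping computation shows $\prod_{j=1}^{i-2}(B_j+1)=B_{i-1}-d+1$, hence $B_i=B_{i-1}^2-(d-2)B_{i-1}$. Assuming $d\ge 2$ (harmless here) this yields $B_i\le B_{i-1}^2$, and starting from $B_2=2d$ a one-step induction gives $B_i\le(2d)^{2^{i-2}}$ for every $i\ge 2$. Specialising to $i=n$ produces the degree bound $\deg(\I,\prec)\le B_n\le(2d)^{2^{n-2}}$.

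There remain the two inequalities in $\deg(\I,\prec)\le\reg(\I)\le(2d)^{2^{n-2}}$. The left one is a consequence of the Pommaret material of Section~\ref{sec:2}: in quasi stable position $\reg(\I)$ equals the maximal degree $\deg(\H)$ of the Pommaret basis $\H$, and since $\H$ is a (generally non-reduced) Gr\"obner basis its degree dominates $\deg(\I,\prec)$. For the right inequality I would pass to the generic initial ideal, using that the regularity is invariant under linear changes of variables (Remark~\ref{rem}), so $\reg(\I)=\reg(\gin(\I))$. As $\gin(\I)$ is strongly stable, hence stable, its Pommaret basis coincides with its reduced Gr\"obner basis and therefore $\reg(\gin(\I))=\deg(\gin(\I),\prec)$. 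Since $\gin(\I)$ is itself quasi stable and satisfies CP together with all its coordinate restrictions $\gin(\I_i)=\gin(\I)_i$, the recursion of the first two paragraphs applies verbatim and bounds $\deg(\gin(\I),\prec)$ by $(2d)^{2^{n-2}}$.

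The one genuinely delicate point is exactly this last passage. Quasi stable position is strictly weaker than strongly stable position (cf.\ Example~\ref{ex:green}), so the convenient equality $\reg=\deg(\cdot,\prec)$ is available only after replacing $\I$ by $\gin(\I)$; the recursion itself is routine once CP and the quasi stable identities are granted. Accordingly, the step that needs real care is checking that the CP hypotheses transfer to $\gin(\I)$ and to each of its coordinate restrictions, which is precisely the ``CP modulo the last variable'' property noted in Section~\ref{sec:4}.
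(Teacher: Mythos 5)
Your first two paragraphs reproduce the paper's pre-theorem analysis faithfully (same recursion $B_i=d-1+\prod_{j=1}^{i-1}(B_j+1)$, same telescoping to $B_i=B_{i-1}^2-(d-2)B_{i-1}$, same solution), and your use of the Pommaret basis to get $\deg(\I,\prec)\le\reg(\I)$ in quasi stable position is exactly the paper's first step. The genuine gap is precisely where you sensed delicacy: the equality $\reg(\I)=\reg(\gin(\I))$ does \emph{not} follow from the invariance of the regularity under linear changes of variables (Remark~\ref{rem}). Invariance gives only $\reg(\I)=\reg(A.\I)$ for $A\in\mathcal{U}$; but $\gin(\I)=\LT(A.\I)$ is the \emph{leading term ideal} of the transformed ideal, not the transformed ideal itself, and passing to a leading term ideal is not a linear change of coordinates -- in general it can strictly increase the regularity. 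So as written, the right-hand inequality $\reg(\I)\le(2d)^{2^{n-2}}$ is unproven. The missing ingredient is the equality $\reg(\J)=\reg(\LT(\J))$ for an ideal $\J$ in quasi stable position, which is available from the Pommaret machinery of Section~\ref{sec:2}: $\reg(\J)$ equals the maximal degree of an element of the Pommaret basis $\H$ of $\J$, and $\LT(\H)$ is a Pommaret basis of $\LT(\J)$ with the same degrees; alternatively one can invoke the Bayer--Stillman theorem \cite{bayer_stillman} in characteristic zero. The paper sidesteps the issue by never passing to the monomial ideal: it replaces $\I$ by the \emph{polynomial} ideal $A.\I$ in strongly stable position, where Mall's theorem \cite{Mall} (Pommaret basis $=$ reduced Gr\"obner basis in stable position) gives $\deg(A.\I,\prec)=\reg(A.\I)=\reg(\I)$ directly, Prop.~\ref{popstrong} supplies CP for $A.\I$ and all its restrictions, and the recursion then bounds $\deg(A.\I,\prec)$.

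A second, smaller flaw: you assert $\gin(\I_i)=\gin(\I)_i$ without proof. This commutation of gin with coordinate restrictions is a nontrivial fact about degrevlex generic initial ideals, and fortunately you do not need it. All the recursion requires is that each restriction $(\gin(\I))_i$ satisfies CP w.r.t.\ $d$, and this follows because $(\gin(\I))_i=\LT\bigl((A.\I)_i\bigr)$ (for degrevlex and homogeneous ideals, restriction along the trailing variables commutes with taking leading terms), $(A.\I)_i$ is in strongly stable position and generated in degrees at most $d$, and Prop.~\ref{popstrong} applies. With these two repairs your argument is sound and is essentially the paper's proof, differing only in routing the regularity bound through the monomial ideal $\gin(\I)$ rather than the transformed polynomial ideal $A.\I$.
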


\begin{proof}
  We mentioned already above that for any ideal in quasi stable position
  $\deg(\I,\prec)\le \reg(\I)$, since the regularity equals the maximal
  degree of an element of the Pommaret basis of $\I$.  As the regularity
  remains invariant under linear coordinate transformations, we may
  w.l.o.g.\ assume that $\I$ is even in strongly stable position where
  $\deg(\I,\prec)=\reg(\I)$ and where Prop.~\ref{popstrong} entails that
  also $\I_1,\ldots, \I_{n-1},\I$ satisfy CP w.r.t.~$d$.  Now the assertion
  follows from the consideration above.
\end{proof}

We derive now a dimension dependent upper bound for $\deg(\I,\prec)$.

\begin{theorem}\label{Mainthm3}
  Suppose that $\I$ is in strongly stable position and $D=\dim(\I)\ge 1$.
  Then
  \begin{displaymath}
    \deg(\I,\prec)=\reg(\I)\le \bigl(d^{n-D}+(n-D)(d-1)\bigr)^{2^{D-1}}\,.
  \end{displaymath}
\end{theorem}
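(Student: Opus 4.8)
The plan is to mimic the inductive structure of Theorem \ref{Mainthm1}, but with a sharper invariant that tracks degrees additively rather than through the crude $\max\{d,\#F(\I_n)\}^2$ estimate. The key observation is that the recursion in Lem.\ \ref{Mora}(a), namely $\deg(\I,\prec)\le \max\{d,\deg(\I_n,\prec)\}+\#F(\I_n)$, combined with the squaring behaviour of $\#F$ in Lem.\ \ref{Mora}(b), should be unwound more carefully. First I would reduce to the zero-dimensional base of the induction: when we slice down to $\I_{n-D+1}$, the dimension drops by one at each step (as noted before, $\dim(\I)=\dim(\I_n)+1$), so after $D$ slicings we arrive at a zero-dimensional ideal in $n-D$ variables, to which Lem.\ \ref{lem0dim} applies, giving $\#F\le d^{n-D}$ and $\deg\le (n-D)(d-1)+1$. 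These are exactly the two quantities appearing inside the final bound, which strongly suggests the induction should be set up on $D$ with the induction hypothesis being precisely the stated inequality (with $n-D$ fixed as the number of ``active'' variables in the zero-dimensional fibre).

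Second, I would establish the $\#F$ half of the induction. Setting $Q(D)=\bigl(d^{n-D}+(n-D)(d-1)\bigr)^{2^{D-1}}$ as the target for $\deg(\I,\prec)$, I expect a companion bound of the form $\#F(\I)\le \bigl(d^{n-D}+(n-D)(d-1)\bigr)^{2^{D}}$ or a closely related expression, proved by feeding the inductive estimate for $\#F(\I_n)$ into Lem.\ \ref{Mora}(b). The base case $D=0$ comes from Lem.\ \ref{lem0dim}(b), where $\#F(\I)\le d^{n}$ — but here with the number of variables equal to $n-D=n$, so $d^{n-D}$. The inductive step squares the previous bound exactly as in the proof of Thm.\ \ref{Mainthm1}, and the additive term $(n-D)(d-1)$ must be shown to survive the squaring in a controlled way; this is where I would need the elementary inequality that for the base degree estimate $(n-D)(d-1)+1$, squaring respects the intended exponent pattern $2^{D-1}\mapsto 2^{D}$.

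Third, for the degree half, I would apply Lem.\ \ref{Mora}(a) in the form $\deg(\I,\prec)\le \max\{d,\deg(\I_n,\prec)\}+\#F(\I_n)$ and substitute the inductive bounds for both $\deg(\I_n,\prec)$ and $\#F(\I_n)$. The delicate point is the additive combination: unlike Thm.\ \ref{Mainthm1}, where both summands coincidentally equalled $d^{(n-D)2^{D-1}}$ and summed to $2d^{(n-D)2^{D-1}}$, here $\deg(\I_n,\prec)$ is bounded by the $(D-1)$-level quantity $\bigl(d^{n-D}+(n-D)(d-1)\bigr)^{2^{D-2}}$ while $\#F(\I_n)$ is bounded by a quantity whose exponent is $2^{D-1}$. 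I would show that the $\#F$ term dominates, so that the sum is at most a constant times the $\#F$ bound, and then absorb that constant into the final exponent. This requires a comparison of the two nested-exponential expressions; I expect that $\deg(\I_n,\prec)+\#F(\I_n)\le \bigl(d^{n-D}+(n-D)(d-1)\bigr)^{2^{D-1}}$ holds because the square of the degree bound is itself dominated by the $\#F$ bound.

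The main obstacle I anticipate is the bookkeeping of the number-of-variables parameter across the recursion. Since each slicing $\I\mapsto \I_n$ both decreases the dimension and decreases the ambient variable count by one, the crucial invariant is the \emph{codimension} $n-D$, which stays constant throughout the induction; getting the paper's formula to close requires recognizing that $n-D$ is the genuinely fixed quantity and that the doubly-exponential exponent $2^{D-1}$ counts exactly the number of squarings incurred in passing from the zero-dimensional fibre up to $\I$. Verifying that the additive correction $(n-D)(d-1)$ is preserved — rather than blowing up — under repeated squaring, and checking the two auxiliary inequalities used to bound the sum by a single nested power, is where the real work lies; the combinatorial structure from Lem.\ \ref{Mora} and the zero-dimensional estimates from Lem.\ \ref{lem0dim} supply everything else.
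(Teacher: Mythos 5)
Your plan diverges completely from the paper's proof, which never touches Lem.~\ref{Mora} at this point: the paper argues \emph{\`a la} Caviglia--Sbarra, counting minimal generators of $\LT(\I)$ directly. Strongly stable position forces $\I_{n-D}$ to be zero-dimensional (\cite[Prop~3.15]{wms:comb2}), so a minimal generator of $\LT(\I)$ involving one of $x_{n-D+1},\ldots,x_n$ has its $x_1,\ldots,x_{n-D}$-part among the at most $d^{n-D}$ terms outside $\LT(\I_{n-D})$, while its $x_i$-exponent for $n-D+1\le i\le n-1$ is bounded by $\deg(\I_i,\prec)$; CP (Prop.~\ref{popstrong}), applied w.r.t.\ $(n-D)(d-1)+1$, converts this count into $\deg(\I,\prec)\le d^{n-D}\prod_{i=n-D+1}^{n-1}\bigl(\deg(\I_i,\prec)+1\bigr)+(n-D)(d-1)$, and the recursion $B_j\le B_{j-1}^2$ closes in $D-1$ squarings. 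Note that this handles $D=1$ for free (empty product), whereas your induction cannot even start there: Lem.~\ref{Mora}$(a)$ combined with Lem.~\ref{lem0dim} yields only $\deg(\I,\prec)\le\bigl((n-1)(d-1)+1\bigr)+d^{n-1}$, which is the asserted bound \emph{plus one}. So your claim that the induction hypothesis can be ``precisely the stated inequality'' anchored at the zero-dimensional slice fails at $D=1$; you would have to import Thm.~\ref{laz3} for $D\le 1$, exactly as the proof of Thm.~\ref{Mainthm1} does (and then check $n(d-1)+1\le d^{n-1}+(n-1)(d-1)$).

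Two further steps fail as written. Your proposed companion bound $\#F(\I)\le\bigl(d^{n-D}+(n-D)(d-1)\bigr)^{2^{D}}$ is too weak to close the degree induction: writing $X=d^{n-D}+(n-D)(d-1)$, it gives $\#F(\I_n)\le X^{2^{D-1}}$ and hence only $\max\{d,\deg(\I_n,\prec)\}+\#F(\I_n)\le X^{2^{D-2}}+X^{2^{D-1}}>X^{2^{D-1}}$; you must instead use the strictly smaller bound $\#F(\I_n)\le d^{(n-D)2^{D-1}}$ already available from Thm.~\ref{Mainthm1}. Moreover, your justification of the additive step is inverted and your absorption tactic is false: the square of the degree bound, $X^{2^{D-1}}$, \emph{dominates} the $\#F$ bound $d^{(n-D)2^{D-1}}$, not conversely, and bounding the sum by twice the larger summand fails numerically --- for $D=2$, $n-D=2$, $d=4$ one gets $2d^{(n-D)2^{D-1}}=512>484=X^{2^{D-1}}$. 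The step is repairable: with $a=X^{2^{D-2}}$ and $b=\bigl(d^{n-D}\bigr)^{2^{D-2}}$ one has $a\ge b+1$, hence $a^2-b^2=(a-b)(a+b)\ge a$, i.e.\ $X^{2^{D-2}}+d^{(n-D)2^{D-1}}\le X^{2^{D-1}}$ for all $D\ge 2$ --- but it is this inequality, not the domination you describe, that must be proved. Finally, the theorem also asserts the equality $\deg(\I,\prec)=\reg(\I)$, which your proposal never addresses; in strongly stable position it follows because the Pommaret basis coincides with the reduced Gr\"obner basis \cite{Mall} and its maximal degree equals $\reg(\I)$ \cite{wms:comb2}. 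In summary, your Giusti-style route could be salvaged into a genuinely different proof for $D\ge 2$, but as proposed it establishes neither the $D=1$ case nor the inductive step.
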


\begin{proof}
  Since $\I$ is in strongly stable position, the ideal
  $\I_{n-D}\subset \kk[x_1,\ldots ,x_{n-D}]$ is zero-dimensional
  \cite[Prop~3.15]{wms:comb2}. According to Lem. \ref{lem0dim},
  $\deg(\I_{n-D})\le (n-D)(d-1)+1$. Hence the maximal degree of a term in
  $G(\LT(\I))$ which depends only on $x_1,\ldots ,x_{n-D}$ is at most this
  bound. We shall now construct an upper bound for the degree of the terms
  in $G(\LT(\I))$ containing at least one of the remaining variables
  $x_{n-D+1},\ldots ,x_n$.  Following the approach of Caviglia and Sbarra,
  we first look for an upper bound for the number of these terms.

  Consider a term $m=x_1^{\alpha_1} \cdots x_n^{\alpha_n}\in G(\LT(\I))$
  with $\alpha_i>0$ for some $i\ge n-D+1$. It is clear that
  $x_1^{\alpha_1} \cdots x_{n-D}^{\alpha_{n-D}}$ belongs to the complement
  of $\LT(\I_{n-D})$. Since the ideal
  $\I_{n-D}\subset \kk[x_1,\ldots ,x_{n-D}]$ is zero-dimensional,
  Lem.~\ref{lem0dim} entails that
  $\dim_{\kk}\bigl(\kk[x_1,\ldots ,x_{n-D}]/\I_{n-D}\bigr)\le
  d^{n-D}$. Hence the number of terms
  $x_1^{\alpha_1} \cdots x_{n-D}^{\alpha_{n-D}}$ is at most $d^{n-D}$. On
  the other hand, for any index $n-D+1\le i\le n$ we have
  $\alpha_i\le\deg_{i}(\LT(\I))\le \deg(\I_i,\prec)$. Furthermore, we know
  that two distinct term in $G(\LT(\I))$ differ already in their first
  $n-1$ variables. These arguments imply that the number of terms in
  $G(\LT(\I))$ containing at least one of the variables
  $x_{n-D+1},\ldots ,x_n$ is at most
  $d^{n-D}\prod_{i=n-D+1}^{n-1}\bigl(\deg(\I_i,\prec)+1\bigr)$.

  The strongly stability of $\I$ implies that CP holds for $\LT(\I)$
  w.r.t.\ $(n-D)(d-1)+1\ge d$ by Prop.~\ref{popstrong}. Hence
  $\deg{(\I,\prec)}-\bigl((n-D)(d-1)+1\bigr)+1$ must be less than or equal
  to the number of terms in $G(\LT(\I))$ containing at least one of the
  variables $x_{n-D+1},\ldots ,x_n$ leading to the estimate
  \begin{displaymath}
    \deg(\I,\prec)\le 
    d^{n-D}\prod_{i=n-D+1}^{n-1}\bigl(\deg(\I_i,\prec)+1\bigr)+(n-D)(d-1)\,.
  \end{displaymath}
  Set $B_{n-D+1}=d^{n-D}+(n-D)(d-1)$ and recursively
  $B_j=d^{n-D}\prod_{i=n-D+1}^{j-1}(B_i+1)+(n-D)(d-1)$ for
  $n-D+2\le j\le n$. One easily verifies that these numbers satisfy the
  recursion relation
  $B_j= \bigl(B_{j-1}-(n-D)(d-1)\bigr)(B_{j-1}+1)+(n-D)(d-1)=
  B_{j-1}^2-\bigl((n-D)(d-1)-1\bigr)B_{j-1}$. We may again assume that
  $d\ge 2$, and therefore $B_j\le B_{j-1}^2$ for $n-D+2\le j\le n$. This
  implies that $B_{j}\le (d^{n-D}+(n-D)(d-1))^{2^{j-n+D-1}}$ and in
  particular we have $B_{n}\le (d^{n-D}+(n-D)(d-1))^{2^{D-1}}$. $\qed$
\end{proof}

\begin{remark}
  Let us compare the dimension dependent bounds
  $A(n,d,D)=2d^{(n-D)2^{D-1}}$ derived in Thm.~\ref{Mainthm1} and
  $B(n,d,D)=2(1/2d^{n-D}+d)^{2^{D-1}}$ due to Mayr and Ritscher \cite{MR}
  with $C(n,d,D)=(d^{n-D}+(n-D)(d-1))^{2^{D-1}}$ obtained now.  Obviously,
  all three bounds describe essentially the same qualitative behaviour,
  although they are derived with fairly different approaches.  However, the
  bound $B(n,d,D)$ of Mayr and Ritscher has almost always the best
  constants.  But there are some cases where one of the other bounds is
  better.  For example, in the case of a hypersurface, i.e.\ for $D=n-1$,
  $A(n,d,D)$ is smaller than $B(n,d,D)$.  For some curves of low degree,
  i.e.\ for $D=1$ and small values of $d$, $C(n,d,D)$ is smaller than
  $B(n,d,D)$.  Some concrete inequalities are:
  \begin{itemize}
  \item $A(5, 3, 4)<C(5, 3, 4)$,
  \item $A(3, 5, 2)>C(3, 5, 2)$,
  \item $A(5, 2, 4)<B(5, 2, 4)$,
  \item $A(5,4,2)>B(5,4,2)$,
  \item $B(4, 5, 1)>C(4, 5, 1)$,
  \item $B(5, 2, 3)<C(5, 2, 3)$.
  \end{itemize}
  Hence no bound is always the best one.
\end{remark}

Again an application of Prop.~\ref{propdepth} yields immediately an
improved bound depending on both the depth and the dimension of $\I$.

\begin{corollary}\label{Cavag}
  Under the assumptions of Thm.~\ref{Mainthm3}, one has
  $\deg(\I,\prec)=\reg(\I)\le (d^{n-D}+(n-D)(d-1))^{2^{D-\lambda-1}}$.
\end{corollary}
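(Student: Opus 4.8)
The plan is to obtain the refined bound by feeding the dimension-dependent estimate of Theorem~\ref{Mainthm3} into the depth-reduction mechanism of Proposition~\ref{propdepth}, exactly as Corollary~\ref{Cor2} was derived from Theorem~\ref{Mainthm1}. Concretely, I would set
\[
  U(n,d,D) = \bigl(d^{n-D}+(n-D)(d-1)\bigr)^{2^{D-1}}
\]
and, after checking that this function meets the hypothesis of Proposition~\ref{propdepth}, read off the conclusion $\deg(\I,\prec)\le U(n-\lambda,d,D-\lambda)$.

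The first step is to bridge the mismatch in hypotheses: Theorem~\ref{Mainthm3} assumes strongly stable position, whereas Proposition~\ref{propdepth} requires $U(n,d,D)$ to bound $\deg(\cdot,\prec)$ for \emph{every} ideal in quasi stable position. For an arbitrary quasi stable ideal one always has $\deg(\I,\prec)\le\reg(\I)$, since the regularity equals the maximal degree of an element of the Pommaret basis. By Remark~\ref{rem} the regularity is invariant under linear coordinate changes, and a generic such change moves $\I$ into strongly stable position while leaving $n$, $d$ and $D$ untouched; for the transformed ideal Theorem~\ref{Mainthm3} gives $\reg(\I)=\deg(\I,\prec)\le U(n,d,D)$. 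Chaining these facts shows $U(n,d,D)$ bounds $\deg(\I,\prec)$ on all quasi stable ideals, so Proposition~\ref{propdepth} applies.

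The remaining step is the substitution itself. Replacing $n$ by $n-\lambda$ and $D$ by $D-\lambda$ and using the cancellation $(n-\lambda)-(D-\lambda)=n-D$ in the base, together with $2^{(D-\lambda)-1}=2^{D-\lambda-1}$ in the exponent, yields
\[
  U(n-\lambda,d,D-\lambda)=\bigl(d^{n-D}+(n-D)(d-1)\bigr)^{2^{D-\lambda-1}}\,,
\]
which is precisely the claimed bound. Theorem~\ref{Mainthm3} is legitimately applied to the $(n-\lambda)$-variable reduction because $D-\lambda\ge 1$, holding in the non-Cohen-Macaulay case $D>\lambda$ under consideration, and the equality $\deg(\I,\prec)=\reg(\I)$ in the statement is inherited from the strongly stable hypothesis of Theorem~\ref{Mainthm3}.

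I anticipate no genuine difficulty: the arithmetic is immediate and the depth reduction is already packaged in Proposition~\ref{propdepth}. The only point that warrants care is the verification in the second paragraph that the strongly-stable bound of Theorem~\ref{Mainthm3} really does extend to all quasi stable ideals, since this is exactly the input Proposition~\ref{propdepth} demands; the invariance of the Castelnuovo--Mumford regularity supplies it.
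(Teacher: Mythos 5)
Your proposal is correct and matches the paper's intended argument: the paper derives Corollary~\ref{Cavag} precisely by applying Prop.~\ref{propdepth} to the bound of Thm.~\ref{Mainthm3}, exactly as Cor.~\ref{Cor2} follows from Thm.~\ref{Mainthm1}. Your bridging of the quasi stable versus strongly stable hypothesis via invariance of the Castelnuovo--Mumford regularity is the same device the paper uses in the proof of Thm.~\ref{Mainthm2}, and your observation that one needs $D>\lambda$ (so that $D-\lambda\ge 1$) correctly makes explicit the non-Cohen-Macaulay assumption under which the paper works.
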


It should be noted that in positive characteristic it is not always
possible to achieve strongly stable position by linear coordinate
transformations (see \cite{hashemi3} for a more detailed
discussion). Nevertheless, following \cite{Caviglia}, we state the
following conjecture.

\begin{conjecture}
  The upper bound for the Castelnuovo-Mumford regularity of $\I$ in
  Cor.~\ref{Cavag} holds independently of the characteristic of $\kk$.
\end{conjecture}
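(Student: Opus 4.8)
The plan is to remove the two places where characteristic zero enters the proof of Cor.~\ref{Cavag}: the \emph{achievability} of strongly stable position and the \emph{crystallisation principle} (CP, Prop.~\ref{popstrong}). The depth reduction itself is harmless in this respect: the corollary to Prop.~\ref{propdepth} is a purely combinatorial statement about Pommaret bases of quasi stable ideals, and quasi stable position \emph{is} attainable by a generic linear change of variables in every characteristic (Remark~\ref{rem}). Thus it suffices to establish the bound of Thm.~\ref{Mainthm3}, namely $\reg(\I)\le (d^{n-D}+(n-D)(d-1))^{2^{D-1}}$, with no hypothesis on $\ch(\kk)$; the passage from $D$ to $D-\lambda$ then goes through verbatim.

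First I would reduce to a Borel-fixed monomial ideal. For the degree reverse lexicographic order the equality $\reg(\I)=\reg(\gin(\I))$ holds in \emph{arbitrary} characteristic — this is the characteristic-free part of the Bayer--Stillman theory — and $\gin(\I)$ is Borel-fixed over any infinite field (cf.\ Thm.~\ref{gal}). Moreover, by Macaulay's theorem the Hilbert function, and hence every quantity in the proof of Thm.~\ref{Mainthm3} that is genuinely the dimension of a graded piece of a quotient — in particular the estimates $\dim_\kk(\kk[x_1,\ldots,x_{n-D}]/\I_{n-D})\le d^{n-D}$ and $\deg(\I_{n-D},\prec)\le (n-D)(d-1)+1$ coming from Lem.~\ref{lem0dim} — is insensitive to $\ch(\kk)$. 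The invariants $n$, $d$, $D=\dim(\I)$ and $\lambda=\depth(\I)$ are likewise preserved (Remark~\ref{rem}). Writing $J=\gin(\I)$, the target becomes a bound on $\reg(J)$ for a Borel-fixed ideal with prescribed Hilbert function and $\deg_i$-profile.

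The substitute for CP is the characteristic-free regularity theory of Borel-fixed ideals. Where the char-zero argument used CP to guarantee a minimal generator of $\LT(\I)$ in every degree between the base value $(n-D)(d-1)+1$ and $\reg(\I)$, thereby converting the counting estimate $d^{n-D}\prod_{i=n-D+1}^{n-1}(\deg(\I_i,\prec)+1)$ on the number of generators into a degree bound, I would instead read the regularity of $J$ off its combinatorial structure. The numbers $\deg_i(J)$ are still governed by the lower-dimensional slices $J_i$ exactly as in the quasi stable computation preceding Thm.~\ref{Mainthm2}, so the recursion $B_j=B_{j-1}^2-((n-D)(d-1)-1)B_{j-1}$ driving the bound is formally unchanged; what must be re-proven char-free is the single inequality $\reg(J)\le \text{(base)}-1+\#\{u\in G(J): u \text{ involves some } x_i,\ i\ge n-D+1\}$.

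The main obstacle is precisely this last inequality. In characteristic zero it rests on the identity $\reg(J)=\deg(J)$ (the maximal degree of a minimal generator), valid for strongly stable ideals via the Eliahou--Kervaire resolution. For a \emph{$p$-Borel} ideal this identity can \emph{fail}: the regularity may strictly exceed the top generator degree, as the structure theory of Pardue and Herzog--Popescu shows. The genuine content of the conjecture is therefore to control this excess — to prove that even when $\reg(\gin(\I))>\deg(\gin(\I))$, the regularity still does not overshoot $(d^{n-D}+(n-D)(d-1))^{2^{D-1}}$. I would attack this by bounding $\reg(J)$ for a $p$-Borel ideal in terms of its $\deg_i$-profile through the explicit (Pommaret or Eliahou--Kervaire-type) resolutions available for such ideals, and checking that the resulting estimate feeds into the same quadratic recursion. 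Should this uniform control prove elusive, a fallback is a specialisation argument comparing $\I$ with a lift to characteristic zero sharing the same Hilbert function and invoking upper-semicontinuity of regularity in flat families to transport the char-zero bound; the difficulty there is that the generic initial ideal, and hence the fine $\deg_i$-profile, need not be preserved under reduction modulo $p$.
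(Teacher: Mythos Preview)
The statement you are addressing is a \emph{conjecture}: the paper offers no proof, merely recording it as an open problem in the spirit of Caviglia--Sbarra's characteristic-free version of the cruder bound $(2d)^{2^{n-2}}$. There is therefore no ``paper's own proof'' to compare against.

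Evaluated on its own terms, your proposal is an accurate diagnosis of where the characteristic-zero argument breaks but not a proof. You correctly isolate the two uses of $\ch(\kk)=0$ --- achievability of strongly stable position and the crystallisation principle --- and you correctly observe that for a $p$-Borel ideal the equality $\reg(J)=\deg(J)$ may fail (Pardue, Herzog--Popescu), which is exactly why the conjecture is non-trivial. However, the decisive step, the inequality
\[
  \reg(J)\ \le\ (n-D)(d-1)\ +\ \#\{u\in G(J):\ u\text{ involves some }x_i,\ i>n-D\},
\]
is precisely what remains unproven: you propose to ``bound $\reg(J)$ for a $p$-Borel ideal in terms of its $\deg_i$-profile'' and to ``check that the resulting estimate feeds into the same quadratic recursion,'' but no such bound is supplied, and the known explicit formulas for $\reg$ of $p$-Borel ideals (e.g.\ Herzog--Popescu, Aramova--Herzog) are \emph{not} of the required shape --- they involve $p$-adic carries and can produce regularity far in excess of the top generator degree. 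Your fallback via flat degeneration and upper-semicontinuity likewise does not close the gap, for the reason you yourself name: the $\deg_i$-profile of $\gin(\I)$ need not specialise, so one cannot transport the recursion. In short, you have written a sound research plan that identifies the genuine obstruction, but the conjecture remains a conjecture after your proposal.
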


\section{ Lazard's upper bound}
\label{sec:6}

Finally, in this section we study Lazard's upper bound \cite{Daniel83} for the
degree of Gr\"obner bases for both  homogeneous and non-homogeneous
ideals.  We provide a simple proof for his results and
generalize Giusti's bound to non-homogeneous ideals. Note that for Lazard
\cite{Daniel83} dimension was always the one as projective variety, whereas
we use throughout this paper the one as affine variety which is one
higher.  In the sequel, we always set $d_{i}=1$ for any $i>k$.

\begin{theorem}(\cite[Thm.~2]{Daniel83})\label{laz1} 
  Assume that $\dim(\I)\le 1$. Then we have $\deg{(\gin(\I),\prec)}\le
  d_1+\cdots +d_r-r+1$ where $r=n-\lambda$. 
\end{theorem}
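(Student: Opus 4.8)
The plan is to reduce the statement, in stages, to the zero-dimensional bound of Theorem~\ref{maincor}. Since $\reg$ is invariant under linear coordinate changes (Remark~\ref{rem}) and, in characteristic zero, $\deg(\gin(\I),\prec)=\reg(\I)$ (which in turn equals $\deg(\I,\prec)$ in strongly stable coordinates), it suffices to bound $\reg(\I)$. The first step is to strip off the depth: by Proposition~\ref{propdepth} and the Corollary following it, the last $\lambda$ variables form a regular sequence on $\R$ in quasi stable position, so restricting to $\kk[x_1,\ldots,x_r]$ with $r=n-\lambda$ yields an ideal $\tilde\I$ with $\reg(\I)=\reg(\tilde\I)$, generated by forms of degrees at most $d_1,\ldots,d_k$, and with $\depth(\tilde\I)=0$ and $\dim(\tilde\I)=D-\lambda\le 1-\lambda$. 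This replaces the problem by one in $r$ variables of depth zero.

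Now split on $\dim(\tilde\I)$. If $\lambda=D$ (the Cohen--Macaulay case, which in particular covers every $D=0$), then $\tilde\I$ is zero-dimensional in $\kk[x_1,\ldots,x_r]$, and Theorem~\ref{maincor} applied in $r$ variables gives $\reg(\tilde\I)=\deg(\tilde\I,\prec)\le d_1+\cdots+d_r-r+1$ directly; the convention $d_i=1$ for $i>k$ only pads the sum with vanishing summands. This disposes of the statement except in the single remaining configuration $\lambda=0$, $D=1$, where $r=n$ and $\tilde\I=\I$.

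For this main case, a one-dimensional ideal of depth zero, I would compare $\I$ with the complete intersection $\J=\langle g_1,\ldots,g_{n-1}\rangle$ provided by Lemma~\ref{Sec3:Lem1}, whose generators have degrees $d_1,\ldots,d_{n-1}$, so that $\P/\J$ is a one-dimensional complete intersection with $\reg(\P/\J)=\sum_{i=1}^{n-1}(d_i-1)$. Writing $\reg(\I)=\reg(\P/\I)+1$ and using the short exact sequence $0\to \I/\J\to \P/\J\to \P/\I\to 0$ together with standard regularity estimates, the target $\reg(\I)\le \reg(\P/\J)+d_n=d_1+\cdots+d_n-n+1$ reduces to the single inequality $\reg(\I/\J)\le \reg(\P/\J)+d_n$. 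The submodule $\I/\J\subseteq\P/\J$ is generated in degrees at most $d_n$ (by the images of $f_n,\ldots,f_k$), and the content of the claim is that over a one-dimensional ring such a submodule cannot raise the regularity by more than its top generator degree.

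The hard part will be exactly this last estimate, and it must genuinely exploit $\dim=1$: in higher dimension the analogue fails and is precisely what produces the squared, and ultimately doubly exponential, growth of Lemma~\ref{Mora}. I would control it through the crystallisation principle (Proposition~\ref{popstrong}), which applies in strongly stable position and forces the minimal generators of $\gin(\I)$ of degree exceeding $d$ to fill a contiguous block of distinct degrees; combined with the fact that a one-dimensional strongly stable ideal carries a single infinite chain of standard monomials in the $x_n$-direction, this should bound the number, and hence the degree spread, of the new generators \emph{linearly} rather than by a product, yielding the required $+d_n$. As a fallback I would keep the saturation route, writing $\reg(\I)=\max\{\reg(\Isat),\sat(\I)\}$ with $\Isat$ Cohen--Macaulay of dimension one and reducing $\reg(\Isat)$ to a zero-dimensional section via $x_n$; there the difficulty merely migrates to bounding the generator degrees of $\Isat|_{x_n=0}$ after saturation.
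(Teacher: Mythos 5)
There is a genuine gap, and it sits exactly where you flagged it. Your outer reductions are sound and parallel the paper's tools (depth-stripping via Prop.~\ref{propdepth} and its corollary; the Cohen--Macaulay case via Thm.~\ref{maincor}), but the heart of the theorem --- the case $D=1$, $\lambda=0$ --- is not proved. The passage through $0\to\I/\J\to\P/\J\to\P/\I\to 0$ to the inequality $\reg(\I/\J)\le\reg(\P/\J)+d_n$ gains nothing: by the same exact sequence that inequality is equivalent to the bound $\reg(\P/\I)\le\reg(\P/\J)+d_n-1$ you are trying to establish, so your ``single remaining inequality'' is a restatement of the theorem, not a reduction of it. Moreover, the mechanism you propose for it cannot deliver a linear bound: CP together with the chain structure of $N(\I)$ is precisely the machinery of Lem.~\ref{Mora}$(a)$, which bounds the degree spread of the generators involving $x_n$ by $\#F(\I_n)$, and for the zero-dimensional ideal $\I_n$ the only available estimate is the \emph{product} $\#F(\I_n)\le d_1\cdots d_{n-1}$ (Lem.~\ref{lem0dim}$(b)$), not the sum $\sum (d_i-1)$. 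Your heuristic that a one-dimensional strongly stable ideal carries a ``single'' infinite chain of standard monomials in the $x_n$-direction is false: there are as many chains as $\#F(\I_n)$, i.e.\ up to $d_1\cdots d_{n-1}$ many, which is exactly why the combinatorial count yields a product. Within the paper's own logic the plan is also circular: the CP-based induction of Thm.~\ref{Mainthm1} explicitly takes Thm.~\ref{laz3} (equivalently Thm.~\ref{laz1}) as its base case $D\le 1$, so the $D\le 1$ bound is an \emph{input} to the CP machinery, not an output of it.

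The missing ingredient is Prop.~\ref{laz2} (Lazard's Thm.~3.3 in \cite{Lazard81}): for $\dim(\I)\le 1$ the Hilbert function of $\P/\I$ stabilizes from degree $d_1+\cdots+d_n-n+1$ on, whence $\hilb(\I)\le d_1+\cdots+d_n-n+1$. This is the statement that genuinely exploits $\dim\le 1$, and it is analytic rather than combinatorial. Note that the paper itself gives no proof of Thm.~\ref{laz1} --- it is quoted from \cite{Daniel83} --- and its closest argument, the proof of Thm.~\ref{laz3}, combines Prop.~\ref{laz2} with the quasi-stable-position inequality $\deg(\I,\prec)\le\max\{\hilb(\I),\hilb(\I')\}$ where $\I'=(\I+\li x_n\ri)\cap\kk[x_1,\ldots,x_{n-1}]$, and only then strips the depth via Prop.~\ref{propdepth}. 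To close your argument you must prove or cite Prop.~\ref{laz2} and feed it into such a Hilbert-regularity inequality; your saturation fallback via $\reg(\I)=\max\{\reg(\Isat),\sat(\I)\}$ suffers from the same omission, since bounding $\reg(\Isat)$ again requires exactly this Hilbert-function input, as you yourself concede.
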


We showed in \cite{hashemi2} that many properties of $\gin(\I)$ also hold
for $\lt(\I)$ provided $\I$ is in quasi stable position.  Along these
lines, we shall now prove that in Lazard's upper bound we can replace
$\gin(\I)$ by $\I$, if $\I$ is in quasi stable position. For this, we need
the next proposition also due to Lazard, which is the key point in the
proof of the above theorem.

\begin{proposition}[{\cite[Thm.~3.3]{Lazard81}}]\label{laz2}
  Assume again that $\dim(\I)\le 1$.  Then
  $\dim_{\kk}(\P/\I)_\ell=\dim_{\kk}(\P/\I)_{\ell+1}$ for each
  $\ell \ge d_1+\cdots +d_n-n+1$.
\end{proposition}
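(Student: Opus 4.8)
The plan is to restate the claim as a bound on the Hilbert regularity and then to estimate it by comparison with a complete intersection. Writing $\HF_\I(\ell)=\dim_\kk(\P/\I)_\ell$, observe that since $\dim(\I)=D\le 1$ the Hilbert polynomial $\HP_\I$ has degree $D-1\le 0$ and is hence constant; therefore $\HF_\I(\ell)=\HP_\I(\ell)=\HP_\I(\ell+1)=\HF_\I(\ell+1)$ for every $\ell\ge\hilb(\I)$. It thus suffices to prove the single inequality $\hilb(\I)\le d_1+\cdots+d_n-n+1$, which by Prop.~\ref{hilbI} is a statement about $\deg(p)$ in $\HS_\I(t)=p(t)/(1-t)^D$. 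By Rem.~\ref{rem} and the coordinate invariance of the Hilbert function I may assume throughout that $\I$ is in quasi stable position, so that $x_n$ behaves as a generic linear form for the degree reverse lexicographic order.

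The case $D=0$ is clean and is essentially the computation already used for Lem.~\ref{lem0dim}$(b)$: by Lem.~\ref{Sec3:Lem1} choose a regular sequence $g_1,\dots,g_n\in\I$ with $\deg(g_i)=d_i$ and put $\I'=\langle g_1,\dots,g_n\rangle\subseteq\I$. Then $\HS_{\I'}(t)=\prod_{i=1}^n(1+t+\cdots+t^{d_i-1})$ is a polynomial of degree $d_1+\cdots+d_n-n$, so $\HF_{\I'}(\ell)=0$ for $\ell\ge d_1+\cdots+d_n-n+1$; the inclusion $\I'\subseteq\I$ forces $\HF_\I(\ell)\le\HF_{\I'}(\ell)$ and hence $\HF_\I$ vanishes from that degree on.

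For $D=1$ I would cut down by $x_n$. Multiplication by $x_n$ gives, in each degree, the exact sequence
\[
0\to (0:_{\R}x_n)_\ell\to (\P/\I)_\ell\xrightarrow{\,\cdot x_n\,}(\P/\I)_{\ell+1}\to (\P/(\I+\langle x_n\rangle))_{\ell+1}\to 0,
\]
so that $\HF_\I(\ell+1)-\HF_\I(\ell)=\HF_{\I_n}(\ell+1)-\dim_\kk(0:_{\R}x_n)_\ell$, where $\I_n=\I|_{x_n=0}$ is zero-dimensional in $\kk[x_1,\dots,x_{n-1}]$. Applying the $D=0$ bound just proven, now in $n-1$ variables, yields $\HF_{\I_n}(\ell+1)=0$ as soon as $\ell\ge d_1+\cdots+d_{n-1}-n+1$. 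If $\depth(\I)\ge 1$ (equivalently, $\R$ Cohen--Macaulay, since $D=1$), then $x_n$ is a non-zerodivisor, the annihilator term vanishes identically, and $\HF_\I$ is constant already for $\ell\ge d_1+\cdots+d_{n-1}-n+1\le d_1+\cdots+d_n-n+1$.

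The remaining, and genuinely delicate, situation is $\depth(\I)=0$, where $x_n$ is a zerodivisor and the finite-length module $(0:_{\R}x_n)=H^0_{\mf}(\R)$ need not vanish; the difference formula then only shows that $\HF_\I$ is \emph{non-increasing} for $\ell\ge d_1+\cdots+d_{n-1}-n+1$ (illustrated by the sequence $1,2,1,1,\dots$ for $\langle x_1^2,x_1x_2\rangle$). The real work, and what I expect to be the main obstacle, is to prove that this finite-length part is already exhausted below degree $d_1+\cdots+d_n-n+1$; note that the extra summand $d_n\ge 1$ separating $d_1+\cdots+d_{n-1}-n+1$ from the asserted bound is exactly the room this requires. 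I would attack it either by passing to the saturation $\Isat$ (of positive depth, so that the non-zerodivisor argument above stabilises $\HF_{\Isat}$) and bounding the satiety of $\I$, i.e.\ the top nonzero degree of $H^0_{\mf}(\R)$, or by bounding that top degree directly through the complete-intersection comparison $\I'\subseteq\I$ and the telescoping of the difference formula above.
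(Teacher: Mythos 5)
Your reduction and two of your three cases are correct: reformulating the claim via Prop.~\ref{hilbI} as $\hilb(\I)\le d_1+\cdots+d_n-n+1$ (the Hilbert polynomial being constant for $D\le 1$) is sound; the case $D=0$ by comparison with the complete intersection supplied by Lem.~\ref{Sec3:Lem1} is exactly the computation behind Lem.~\ref{lem0dim}; and after the legitimate reduction to quasi stable position (Rem.~\ref{rem}, the Hilbert function being coordinate-invariant), the multiplication-by-$x_n$ sequence together with the regularity of $x_n$ on $\R$ when $\depth(\I)\ge 1$ correctly settles the Cohen--Macaulay case --- there you even get the stronger bound $d_1+\cdots+d_{n-1}-n+1$, which shows that the summand $d_n$ in the statement exists precisely to absorb the remaining case. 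But note that the paper gives no proof of Prop.~\ref{laz2} at all; it is quoted from \cite{Lazard81}, so the only question is whether your argument is complete. It is not: the case $D=1$, $\depth(\I)=0$, which you explicitly leave open, is not a residual technicality but the entire content of Lazard's theorem.

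Concretely, your difference formula reduces that case to showing $(0:_{\R}x_n)_\ell=0$ for all $\ell\ge d_1+\cdots+d_n-n+1$; since in quasi stable position saturating by $x_n$ is the same as saturating by $\mf$, this is exactly the assertion that $H^0_{\mf}(\R)=\Isat/\I$ vanishes in degrees strictly above $d_1+\cdots+d_n-n$, i.e.\ the satiety bound --- and neither of your proposed repairs delivers it. Passing to $\Isat$ is circular: the generator degrees of $\Isat$ are not controlled by $d_1,\ldots,d_n$ (controlling them is the same satiety problem again), so the non-zerodivisor argument applied to $\Isat$ stabilizes the Hilbert function only from an uncontrolled degree. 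Telescoping the difference formula bounds only the \emph{total} drop of $\HF_\I$ beyond $\ell_0=d_1+\cdots+d_{n-1}-(n-1)$ by $\HF_\I(\ell_0)\le d_1\cdots d_{n-1}$; it does not bound \emph{where} the drops end (they can be separated by flat stretches), and even granting consecutive drops it yields stabilization only from about $\ell_0+d_1\cdots d_{n-1}$, an exponential overshoot where the theorem allows the additive margin $d_n$. What is missing is a mechanism tying the top degree of $(0:_{\R}x_n)$ to $d_n$; Lazard obtains this through an analysis of the Koszul complex of the generators --- morally a duality/linkage statement over the one-dimensional Gorenstein complete intersection $\langle g_1,\ldots,g_{n-1}\rangle\subseteq\I$, whose $a$-invariant $d_1+\cdots+d_{n-1}-n$, augmented by the degree $d_n$ of one further element of $\I$, produces exactly the asserted bound --- and nothing of this kind appears in your sketch.
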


Thus, under the assumptions of this proposition, we can say that
$\hilb(\I)\le d_1+\cdots +d_n-n+1$.

\begin{theorem}\label{laz3}
  Suppose that $\I$ is in quasi stable position and $\dim(\I)\le 1$. Then,
  $\deg(\I,\prec)\le d_1+\cdots +d_r-r+1$ where $r=n-\lambda$.
\end{theorem}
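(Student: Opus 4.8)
The plan is to reduce the general quasi stable case to Lazard's Theorem~\ref{laz1} by exploiting the combinatorial coincidence between $\LT(\I)$ and $\gin(\I)$ that holds in quasi stable position. First I would recall the core identity: by Proposition~\ref{propdepth}, I may reduce the number of variables by the depth $\lambda$, replacing the problem by one in $r=n-\lambda$ variables where the depth drops to zero. Concretely, letting $t$ be the maximal class of an element of the Pommaret basis $\H$, the variables $x_{t+1},\dots,x_n$ form a regular sequence on $\R$, so $\lambda=n-t$ and the truncation $\tilde{\I}=\I|_{x_{t+1}=\cdots=x_n=0}\subset\kk[x_1,\dots,x_r]$ satisfies $\deg(\tilde{\I},\prec)=\deg(\I,\prec)$. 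Moreover this truncation preserves quasi stability and the dimension bound $\dim(\tilde{\I})\le 1$, while the degrees of the Pommaret generators are unchanged, so the relevant $d_i$ are the same and only the first $r$ of them matter.

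Having reduced to the depth-zero situation in $r$ variables, the goal becomes $\deg(\I,\prec)\le d_1+\cdots+d_r-r+1$. The key step is to connect $\deg(\I,\prec)$ with the Hilbert regularity via Proposition~\ref{laz2}. In quasi stable position the regularity $\reg(\I)$ equals the maximal degree of an element of the Pommaret basis $\H$, which in turn equals $\deg(\I,\prec)$ when the ideal is in stable position; in general for an ideal in quasi stable position one has $\deg(\I,\prec)\le\reg(\I)$. I would therefore bound $\reg(\I)$ using the fact that, since $\LT(\I)$ is quasi stable, its Hilbert regularity and its Castelnuovo--Mumford regularity are controlled by the point at which the Hilbert function stabilises to the Hilbert polynomial. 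Proposition~\ref{laz2} gives precisely $\hilb(\I)\le d_1+\cdots+d_n-n+1$, and the analogous statement holds for the truncated ideal with $n$ replaced by $r$.

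The technical heart is showing that in quasi stable position the degree of the reduced \Gr basis is bounded by this Hilbert-regularity quantity rather than merely by the regularity, which could a priori be larger. Here I would invoke the results from \cite{hashemi2} asserting that $\LT(\I)$ inherits the structural properties of $\gin(\I)$ when $\I$ is in quasi stable position, so that the combinatorial argument behind Theorem~\ref{laz1} applies verbatim to $\LT(\I)$ itself. Specifically, when $\dim(\I)\le 1$ the Pommaret basis generators stabilise exactly at the degree where the Hilbert function becomes polynomial, so the maximal generator degree is bounded by $\hilb(\I)+$ a correction that, after the depth reduction, collapses to $d_1+\cdots+d_r-r+1$.

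I expect the main obstacle to be justifying the replacement of $\gin(\I)$ by $\LT(\I)$ in Lazard's argument. Lazard's proof of Theorem~\ref{laz1} relies on the generic behaviour of $\gin(\I)$ under the degree reverse lexicographic order, and transporting it to an arbitrary quasi stable $\LT(\I)$ requires verifying that the specific stability and saturation properties used by Lazard are genuinely present in quasi stable position and not merely in the strictly generic case. The depth reduction via Proposition~\ref{propdepth} is the clean part; the delicate part is checking that the dimension-one truncation still satisfies the hypotheses of Proposition~\ref{laz2} after cutting by $x_{t+1},\dots,x_n$, and that the Hilbert-series degree count translates faithfully into a bound on the maximal Pommaret generator degree. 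Once these combinatorial identifications are in place, the inequality $\deg(\I,\prec)\le d_1+\cdots+d_r-r+1$ follows directly.
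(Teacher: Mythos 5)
There is a genuine gap at exactly the spot you flag as the ``technical heart''. Your plan is to pass from $\deg(\I,\prec)\le\reg(\I)$ to the stated bound by claiming that in quasi stable position $\reg(\I)$ is ``controlled by the point at which the Hilbert function stabilises to the Hilbert polynomial'', i.e.\ by $\hilb(\I)$, which Prop.~\ref{laz2} bounds. That claim is false, even for quasi stable ideals of dimension at most one: for $\I=\li x_1^2\ri\subset\kk[x_1,x_2]$ one has $\hilb(\I)=1$ but $\reg(\I)=\deg(\I,\prec)=2$, and for $\I=\li x_1^{10}\ri$ one has $\hilb(\I)=9$ but $\reg(\I)=\deg(\I,\prec)=10$. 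So neither $\reg(\I)$ nor $\deg(\I,\prec)$ is bounded by $\hilb(\I)$, and the unspecified ``correction that collapses to $d_1+\cdots+d_r-r+1$'' is precisely the missing content; the appeal to \cite{hashemi2} and to ``Lazard's argument applying verbatim to $\LT(\I)$'' never names a statement that closes it. The paper closes it with a precise two-ideal inequality special to quasi stable position, $\deg(\I,\prec)\le\max\{\hilb(\I),\hilb(\I')\}$ where $\I'=(\I+\li x_n\ri)\cap\kk[x_1,\ldots,x_{n-1}]$, cited from \cite[Thm.~4.17]{hashemi1} and \cite[Thm.~4.7]{Seiler}. It then observes that $\I'$ is generated by the $f_i|_{x_n=0}$, that quasi stability forces $\dim(\I')\le 1$ (a hypothesis you would also need to verify and do not), and applies Prop.~\ref{laz2} twice, to $\I$ in $n$ variables and to $\I'$ in $n-1$ variables; Prop.~\ref{propdepth} then converts $n$ into $r=n-\lambda$, as in your first paragraph. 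In both counterexamples above it is the second term $\hilb(\I')$ that actually carries the bound, which shows the auxiliary ideal cannot be dispensed with.

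It is worth noting that the other thread in your sketch could have been completed honestly, and then would coincide with the alternative ``more concise proof'' the paper records in the remark following Thm.~\ref{laz3}: since $\reg$, $\dim$ and $\depth$ are invariant under linear changes of coordinates, Thm.~\ref{laz1} together with $\reg(\I)=\deg(\gin(\I),\prec)$ (characteristic zero) gives $\reg(\I)\le d_1+\cdots+d_r-r+1$, and in quasi stable position $\deg(\I,\prec)$ is bounded by the maximal degree of the Pommaret basis, which equals $\reg(\I)$. You state the inequality $\deg(\I,\prec)\le\reg(\I)$ correctly and you invoke Thm.~\ref{laz1} in spirit, but instead of using its conclusion for $\gin(\I)$ you detour into the false Hilbert-regularity control of $\reg(\I)$. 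As written, the proposal hovers between the two viable routes -- the $\max\{\hilb(\I),\hilb(\I')\}$ inequality on one side, the $\gin$ detour on the other -- without landing either, so the argument does not close.
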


\begin{proof}
  It suffices to show that $\deg(\I,\prec)\le d_1+\cdots +d_n-n+1$, since
  then the desired inequality follows immediately from
  Prop.~\ref{propdepth}. As $\I$ is in quasi stable position, we have the
  inequality $\deg(\I,\prec)\le \max\{\hilb(\I), \hilb(\I')\}$ where
  $\I'=(\I+\li x_n\ri)\cap \kk[x_1,\ldots ,x_{n-1}]$ is an ideal in the
  ring $\kk[x_1,\ldots ,x_{n-1}]$ \cite[Thm. 4.17]{hashemi1}, \cite[Thm.
  4.7]{Seiler}. Obviously, $\I'$ is generated by
  $f_1|_{x_n=0},\ldots ,f_k|_{x_n=0}$ and $\dim(\I)\le 1$ (by using the fact that $\I$ is in quasi stable position) entails
  $\dim(\I')\le 1$. These arguments show that, by Prop.~\ref{laz2},
  $\hilb(\I) \le d_1+\cdots +d_n-n+1$ and
  $\hilb(\I')\le d_1+\cdots +d_{n-1}-(n-1)+1$ which proves the
  assertion. $\qed$
\end{proof}

\begin{example}
  Lazard \cite[Conj.~3]{Daniel83} conjectured that the conclusion of
  Theorem \ref{laz1} remained true, if one replaces $\gin{\I}$ by
  $\I$. Mora claimed that the following ideal (see the Appendix of
  \cite{Daniel83}) provided a counter-example. Consider the homogeneous
  ideal
  $\I=\li x_1x_2^{t-1}-x_3^t, x_1^{t+1}-x_2x_3^{t-1}x_4,
  x_1^tx_3-x_2^tx_4\ri$ in the polynomial ring
  $\P=\kk[x_1,\ldots,x_4]$. Thus we have $d_{1}=t, d_{2}=d_{3}=t+1$.  One
  can show that the polynomial $x_3^{t^2+1}-x_2^{t^2}x_4$ appears in the
  Gr\"obner basis of $\I$ and hence $\deg(\I,\prec)\ge t^{2}+1$. For
  simplicity we restrict to the case $t=4$ where we obtain
  \begin{displaymath}
    \LT(\I)=\li x_1x_2^3, x_1^4x_3, x_1^5, x_1^3x_3^5, x_1^2x_3^9, 
                x_1x_3^{13}, x_3^{17} \ri\,.
  \end{displaymath}
  Thus we find here $\deg(\I,\prec)=17>d_{1}+d_{2}+d_{3}-3=11$.  But as
  $\dim(\I)=2$, $\I$ does not yield a counter-example to Lazard's
  conjecture. However, if we consider
  $\I'=\I|_{x_4=0}\subset \kk[x_1,x_2,x_3]$, then we find that $\I'$ has
  dimension $1$ and that $\LT(\I')$ is generated by the same terms as
  $\LT(\I)$.  $\I'$ is not in quasi stable position, as no pure power of
  $x_2$ belongs to $\LT(\I')$.  Hence $\I'$ represents a counter-example to
  Lazard's conjecture.  This example shows furthermore that in Thm.~\ref{laz3} it is not
  possible to drop the assumption of quasi stable position.
\end{example}

\begin{remark}
  We gave above a direct proof for Thm.~\ref{laz3}. However, we can provide
  a more concise proof using Thm.~\ref{laz1} and Pommaret bases. Indeed,
  from Thm.~\ref{laz1} it follows that $\reg(\I)\le d_1+\cdots +d_r-r+1$
  where $r=n-\lambda$, as $\reg(\I)=\deg(\gin(\I),\prec)$. Since the ideal
  $\I$ is in quasi stable position, it possesses a finite Pommaret basis
  $\H$ where $\reg(\I)$ is the maximal degree of the elements of $\H$ and therefore
  $\deg(\I,\prec)\le d_1+\cdots +d_r-r+1$.  These considerations also yield
  immediately the following corollary.
\end{remark}

\begin{corollary}\label{Laz4}
  If $\dim(\I)\le 1$, then $\reg(\I)\le d_1+\cdots +d_r-r+1$ where $r=n-\lambda$.   
\end{corollary}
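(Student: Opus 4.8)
The plan is to reduce the statement to Theorem~\ref{laz1} through the identity $\reg(\I)=\deg(\gin(\I),\prec)$. First I would recall the sharpness statement quoted after Def.~\ref{defbayer}: in characteristic zero, the Castelnuovo-Mumford regularity equals, in generic coordinates, the maximal degree of an element of the reduced \Gr basis for $\prec$. Since $\gin(\I)=\LT(A.\I)$ for a generic $A\in\mathrm{GL}(n,\kk)$, and since $\reg$ is invariant under linear coordinate changes (Remark~\ref{rem}), this yields $\reg(\I)=\reg(A.\I)=\deg(A.\I,\prec)=\deg(\gin(\I),\prec)$, the last equality because for generic $A$ the leading terms of the reduced \Gr basis of $A.\I$ are precisely a minimal generating set of $\gin(\I)$.

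Next I would invoke Theorem~\ref{laz1} directly. Its only hypothesis is $\dim(\I)\le 1$, which is exactly what we assume, and it gives $\deg(\gin(\I),\prec)\le d_1+\cdots+d_r-r+1$ with $r=n-\lambda$. Combining this with the identity above immediately produces $\reg(\I)\le d_1+\cdots+d_r-r+1$. I would also point out that the depth $\lambda$ entering $r=n-\lambda$ is a coordinate-independent invariant (again Remark~\ref{rem}), so the value of $r$ is the same whether computed for $\I$ or for its generic initial ideal; consequently no hypothesis on the position of $\I$ is required here, in contrast with Thm.~\ref{laz3}.

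The step that needs the most care is the justification of $\reg(\I)=\deg(\gin(\I),\prec)$, since this is exactly where the characteristic-zero assumption is used: in positive characteristic $\gin(\I)$ need not be strongly stable and the equality may fail, $\reg(\I)$ possibly exceeding $\deg(\gin(\I),\prec)$. Everything else is a routine assembly of previously cited results, with no computation to perform. For emphasis I might add the alternative route: if $\I$ is moreover in quasi stable position, its Pommaret basis $\H$ satisfies $\reg(\I)=\deg(\H)$ by \cite[Cor.~9.5]{wms:comb2}, so the bound also follows from Thm.~\ref{laz3}; but the \gin-based argument is shorter and, crucially, dispenses with the position hypothesis.
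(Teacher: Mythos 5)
Your proposal is correct and follows essentially the same route as the paper: the remark immediately preceding Cor.~\ref{Laz4} obtains the bound precisely by combining Thm.~\ref{laz1} with the characteristic-zero identity $\reg(\I)=\deg(\gin(\I),\prec)$, which is exactly your argument. Your additional remarks (coordinate-invariance of $\reg$ and of the depth $\lambda$ via Remark~\ref{rem}, hence no position hypothesis on $\I$) merely spell out what the paper leaves implicit.
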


Finally, we present an affine version of Thm.~\ref{laz3}. We drop now the
assumption that the polynomials $f_1,\ldots ,f_k$ generating $\I$ are
homogeneous. Let $x_{n+1}$ be an extra variable and $\tilde{f}$ the
homogenization of $f$ using $x_{n+1}$. We further denote by $\tilde{\I}$
the ideal generated by $\tilde{f_1}, \ldots ,\tilde{f_k}$ (note that in
general this is \emph{not} equal to the homogenization of $\I$). The next
proposition may be considered as a generalization of Lazard's upper bound
\cite[Thm.~2]{Daniel83} to ideals in quasi stable position.

\begin{proposition}\label{Laz5}
  Assume that $\tilde{\I}$ is in quasi stable position, that
  $\dim(\tilde{\I})\le 1$ and that $\depth(\tilde{\I})=\lambda$.  Then,
  $\deg(\I,\prec) \le d_1+\cdots +d_r-r+1$ where $r=n+1-\lambda$.
\end{proposition}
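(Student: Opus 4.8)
The plan is to reduce the non-homogeneous statement to the homogeneous Theorem~\ref{laz3}, applied to $\tilde{\I}$ in the polynomial ring $\kk[x_1,\ldots,x_{n+1}]$ equipped with the degree reverse lexicographic order extended by $x_{n+1}\prec x_n\prec\cdots\prec x_1$, so that the homogenizing variable is the smallest. Since $\tilde{\I}$ is generated by the homogeneous polynomials $\tilde{f_1},\ldots,\tilde{f_k}$ with $\deg(\tilde{f_i})=\deg(f_i)=d_i$, and is by hypothesis in quasi stable position with $\dim(\tilde{\I})\le 1$ and $\depth(\tilde{\I})=\lambda$, Theorem~\ref{laz3} applies to $\tilde{\I}$ viewed in $n+1$ variables and yields $\deg(\tilde{\I},\prec)\le d_1+\cdots+d_r-r+1$ with $r=(n+1)-\lambda$. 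It therefore suffices to prove $\deg(\I,\prec)\le\deg(\tilde{\I},\prec)$.

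First I would record two elementary facts about this order. For any homogeneous $g\in\kk[x_1,\ldots,x_{n+1}]$, dehomogenization commutes with taking leading terms, i.e.\ $\LT(g|_{x_{n+1}=1})=\LT(g)|_{x_{n+1}=1}$: setting $x_{n+1}=1$ induces a degree-compatible, order-preserving bijection on the monomials of $g$, since two terms of equal total degree are compared in the same way before and after dehomogenization, while a term of larger dehomogenized degree carries fewer factors of $x_{n+1}$ and hence already dominates in $n+1$ variables. Dually, for $f\in\P$ one has $\LT(\tilde{f})=\LT(f)$, the leading term carrying no $x_{n+1}$.

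Next I would show that dehomogenizing a \Gr basis of $\tilde{\I}$ yields a \Gr basis of $\I$. Given $f\in\I$, the standard homogenization-of-a-representation argument produces an exponent $c\ge 0$ with $\tilde{f}\cdot x_{n+1}^{c}\in\tilde{\I}$, a homogeneous element whose leading term is $\LT(f)\cdot x_{n+1}^{c}$ by the facts above. If $\tilde{G}$ is the reduced \Gr basis of $\tilde{\I}$, then $\LT(g)$ divides $\LT(f)\cdot x_{n+1}^{c}$ for some $g\in\tilde{G}$; applying the homomorphism $x_{n+1}\mapsto 1$ and the commutation property gives that $\LT(g|_{x_{n+1}=1})=\LT(g)|_{x_{n+1}=1}$ divides $\LT(f)$. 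Hence $\{g|_{x_{n+1}=1}\mid g\in\tilde{G}\}$ is a \Gr basis of $\I$, and each of its leading terms has degree at most $\deg(g)\le\deg(\tilde{\I},\prec)$. Consequently the minimal generators of $\LT(\I)$ -- equivalently the leading terms of the reduced \Gr basis of $\I$ -- have degree at most $\deg(\tilde{\I},\prec)$, which gives $\deg(\I,\prec)\le\deg(\tilde{\I},\prec)$ and closes the argument.

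The main obstacle is this last step, and specifically the fact that $\tilde{\I}$ is in general strictly contained in the saturation $\tilde{\I}:x_{n+1}^{\infty}$ (the homogenization of $\I$), so one cannot simply assert $\tilde{f}\in\tilde{\I}$. The remedy is precisely the extra factor $x_{n+1}^{c}$: because $x_{n+1}$ is the smallest variable, multiplying by $x_{n+1}^{c}$ does not change which term becomes leading after dehomogenization, so the failure of saturation is harmless for the degree estimate. Everything else is routine bookkeeping with the degree reverse lexicographic order, and the resulting proof parallels the passage from Theorem~\ref{laz1} to Theorem~\ref{laz3}.
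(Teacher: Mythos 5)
Your proposal is correct, and it differs from the paper's proof in the key transfer step from $\tilde{\I}$ back to $\I$. Both arguments start by applying Thm.~\ref{laz3} to $\tilde{\I}$ in $n+1$ variables, but you use its conclusion directly as a bound on $\deg(\tilde{\I},\prec)$ and then establish the purely syntactic inequality $\deg(\I,\prec)\le\deg(\tilde{\I},\prec)$ via the classical fact that dehomogenizing a Gr\"obner basis w.r.t.\ degrevlex with the homogenizing variable smallest yields a Gr\"obner basis of the dehomogenized ideal; your two order-compatibility facts are correct, and your $x_{n+1}^{c}$-correction properly handles the point that $\tilde{\I}$ is in general strictly smaller than the homogenization of $\I$. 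The paper instead converts the bound into one on $\reg(\tilde{\I})$ (via the Pommaret basis, as in Cor.~\ref{Laz4}) and hence on $\hilb(\tilde{\I})$; since $\dim(\tilde{\I})\le 1$, the Hilbert function of $\kk[x_1,\ldots,x_{n+1}]/\tilde{\I}$ is constant beyond the bound, and this stabilization is transferred, through the dehomogenization surjections $(\kk[x_1,\ldots,x_{n+1}]/\tilde{\I})_{\ell}\twoheadrightarrow(\P/\I)_{\le\ell}$, to the truncated affine Hilbert function of $\I$, whose stagnation leaves no standard monomials -- hence no reduced Gr\"obner basis elements -- in higher degrees. What each route buys: the paper's Hilbert-function argument yields the extra information that the affine Hilbert function of $\I$ stabilizes at the stated degree and avoids Gr\"obner-basis combinatorics, but it leaves implicit exactly the saturation issue you flag (relating the Hilbert function of the non-saturated $\tilde{\I}$ to the affine data of $\I$ requires an argument); your version is more elementary and self-contained, making the saturation defect harmless explicitly, at the cost of re-deriving a textbook dehomogenization lemma (available, e.g., in the cited book of Becker and Weispfenning) that could simply be quoted. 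Both routes give the identical bound $d_1+\cdots+d_r-r+1$ with $r=n+1-\lambda$.
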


\begin{proof}
  By Thm.~\ref{laz3},
  $\hilb(\tilde{\I})\le \reg(\tilde{\I}) \le d_1+\cdots +d_r-r+1$ where
  $r=n+1-\lambda$. Hence
  $\dim_\kk(\kk[x_1,\ldots
  ,x_{n+1}]/\tilde{\I})_{\ell}=\dim_\kk(\kk[x_1,\ldots
  ,x_{n+1}]/\tilde{\I})_{\ell+1}$ for all degrees
  $\ell\ge d_1+\cdots +d_r-r+1$. Therefore, we have
  $\dim_\kk(\P/\I)_{\le \ell}=\dim_\kk(\P/\I)_{\le \ell+1}$ for each
  $\ell\ge d_1+\cdots +d_r-r+1$ and this observation implies that the
  reduced Gr\"obner basis of $\I$ contains no element of degree greater
  than $d_1+\cdots +d_r-r+1$. $\qed$
\end{proof}

We conclude this paper by mentioning that it is easy to see that for a
homogeneous ideal with $\dim(\I)\le 1$, being in quasi stable position is
equivalent to being in N\oe ther position. This implies that in
Thm.~\ref{laz3} and Prop.~\ref{Laz5} one can replace ``quasi stable
position'' by ``N\oe ther position''.

\section*{Acknowledgments.}  

The research of the first author was in part supported by a grant from IPM
(No. 94550420).  The work of the second author was partially performed as
part of the H2020-FETOPEN-2016-2017-CSA project $SC^{2}$ (712689). The
authors would like to thank the anonymous reviewers for their valuable
comments.

\bibliographystyle{plain}

\end{document}